\documentclass{article}
\usepackage{algorithm}
\usepackage{algpseudocode}
\usepackage[utf8]{inputenc}
\usepackage{amsmath}
\usepackage{amssymb}
\usepackage{amscd}
\usepackage{indentfirst}
\usepackage{graphicx}
\usepackage{graphics}
\usepackage{pict2e}
\usepackage{epic}
\usepackage{url}
\usepackage{epstopdf}
\usepackage{comment}
\usepackage{tikz}
\usepackage{xcolor}
\usepackage{fancyhdr}
\usepackage{color}
\usepackage{amsthm}
\usepackage{graphics}
\usepackage{graphicx}
\newtheorem{Th}{Theorem}[section]

\newtheorem{?}[Th]{Problem}

\title{Strong core and Pareto-optimal solutions for the multiple partners matching problem under lexicographic preferences}
\author{P\'eter Bir\'o and Gergely Cs\'aji}
\date{10 February 2022}

\begin{document}

\maketitle

\begin{abstract}
In a multiple partners matching problem the agents can have multiple partners up to their capacities. In this paper we consider both the two-sided many-to-many stable matching problem and the one-sided stable fixtures problem under lexicographic preferences. We study strong core and Pareto-optimal solutions for this setting from a computational point of view. First we provide an example to show that the strong core can be empty even under these severe restrictions for many-to-many problems, and that deciding the non-emptiness of the strong core is NP-hard. We also show that for a given matching checking Pareto-optimality and the strong core properties are co-NP-complete problems for the many-to-many problem, and deciding the existence of a complete Pareto-optimal matching is also NP-hard for the fixtures problem. On the positive side, we give efficient algorithms for finding a near feasible strong core solution, where the capacities are only violated by at most one unit for each agent, and also for finding a half-matching in the strong core of fractional matchings. These polynomial time algorithms are based on the Top Trading Cycle algorithm. Finally, we also show that finding a maximum size matching that is Pareto-optimal can be done efficiently for many-to-many problems, which is in contrast with the hardness result for the fixtures problem. 

\end{abstract}

\section{Introduction}

Roth \cite{Roth1984} proposed the study of many-to-many matching markets in the context of job markets, where each worker can have multiple jobs, and each firm can employ multiple workers, but at most one contract can be signed in between any worker and firm. The agents of such a market have choice functions over the possible contracts involving them, that specifies a subset for any given set of contracts. The most well studied solution concept is \emph{stability}. A solution is setwise stable if there are no alternative contracts outside of the solution set that would be selected by all parties in a blocking coalition (possibly rejecting some existing contracts). Pairwise stability means the lack of a single blocking contract. Roth showed that setwise and pairwise stable solutions coincide and exist for specific  substitutable choice functions, and a number of extensions and structural results have been obtained in the follow-up literature \cite{Roth1985}, \cite{Blair1988}, \cite{Fleiner2003},  \cite{KlausWalzl2009}, \cite{KlijnYazici2014}. 

In this paper we are focusing on the concept of strong core and Pareto-optimality under lexicographic preferences. The strong core is a classical solution concept in cooperative game theory, meaning that there is no weakly blocking coalition where there exists a matching for the coalition (without using outside contracts) that is at least as good for all of them, and strict improvement for at least one member. A solution is Pareto-optimal if the grandcoalition is not weakly blocking. It was already observed by Blair \cite{Blair1988} that the (strong) core and the set of pairwise stable solutions can be independent for many-to-many matching problems under substitutable preferences. Further examples of this kind were provided in \cite{Sotomayor1999} and \cite{KonishiUnver2006} for more restricted responsive preferences. In this paper we provide new examples for lexicographic preferences.

What is the relevance of strong core solutions in practice? The bilateral contracts in between agents can create strong bounds, even if two agents are not transacting directly, but they have a connection through third parties then they may care about the well-being of each other. In particular, one would not seek a new transaction with another agent, if this new transaction would result in a worse or terminated deal for an agent in their connected network. 

Let us consider a simple example to make this point clear. Suppose that we have four players, $a$, $b$, $c$, and $d$ and they are transacting with each other through bilateral contracts $ab$, $bc$ and $cd$. Now, $a$ and $d$ has a new potential collaboration, that would be beneficial for both $a$ and $d$, however if this happens then $a$ would cancel her partnership with $b$ making her worse off. Since $b$ and $d$ are connected through $c$, $d$ may decide not to engage in this blocking deal with $a$. 

Are such situations realistic in real world markets? Let us just substitute $a$ with Russia, $b$ with Ukraine, $c$ with USA, and $d$ with Germany. Russia is trading gas with Ukraine, but they would prefer to trade with Germany instead directly through a new channel (Nord Stream 2) and then terminate their deals with Ukraine. USA, who has a strong partnership with both Germany and Ukraine is opposing this new deal, as they are concerned about Ukraine.\footnote{The current situation with Ukraine is more complex obviously, a careful game theoretical analysis can be found about the case of Nord Stream 2 in \cite{Sziklai_etal2020}.}   

Why do we study lexicographic preferences? From a theoretical point of view this is the simplest case of preferences over bundles. When the agents are providing their strict rankings over their potential partners then lexicographic preferences over the bundles are generated in a unique, straightforward way. The responsive and the even more general substitutable preferences have a large spectrum, and a central coordinator of such a market cannot expect the agents to express their preferences over the bundles, since these can be very complex and also exponential in size. Studying the concept of (pairwise) stability can be still tractable based on the preferences over the individual partners, but for studying the (strong) core or Pareto-optimality one would need to make certain assumptions to deal with the ambiguity of possible preference extensions for bundles. \footnote{As an example, we can mention the concept of possible and necessary Pareto-optimality for responsive preferences, that was studied for allocation problems in \cite{Aziz_etal2019}. For given linear orders by the agents over individual partners, a solution is \emph{possibly Pareto-optimal} if it is Pareto-optimal for one possible responsive extension of the individual preferences, and it is \emph{necessarily Pareto-optimal} if it is Pareto-optimal for all possible responsible extension of the individual preferences.} We shall also note that our counter-examples and hardness results for lexicographic preferences are naturally valid for all the above mentioned domains, namely for additive and responsive preferences as well.

\subsection{Related literature}

Many-to-many matching markets have been studied first by Roth in \cite{Roth1984} and \cite{Roth1985}. He considered a model with multiple possible contract terms in between any worker-firm pair, from which they may select at most one. The agents at both sides select the best contracts from a possible set according to their choice functions. Roth showed that if these choice functions are substitutable then a (pairwise) stable matching always exists, and can be obtained by a deferred-acceptance algorithm. The lattice property of (pairwise) stable solutions was proved in \cite{Blair1988}, and even more general result for the existence and lattice structure were obtained by Fleiner for substitutable choice functions by using Tarski's fixpoint theorem \cite{Fleiner2003}. Klaus and Walzl studied special versions of setwise stability under different domain restrictions on substitutable preferences  \cite{KlausWalzl2009}. Klijn and Yazici proved that the rural hospitals theorem holds for substitutable and weakly separable
preferences in many-to-many markets \cite{KlijnYazici2014}.  

The efficient computation of pairwise stable solution many-to-many stable matching problems was demonstrated in \cite{BaiouBalinski2000}, and the problem of computing an optimal solution with respect to the overall rank of the matching was given in \cite{Bansal_etal2003}. For the nonbipartite stable fixtures problem Irving and Scott \cite{IrvingScott2007} provided a linear time algorithm for finding a pairwise stable solution, if one exists. Finally, Fleiner and Cechl\'arov\'a \cite{CechlarovaFleiner2005} extended these tractability results for the case of multiple contracts for the fixtures problem.

Regarding the concept of (strong) core, for many-to-one stable matching markets under responsive preferences the strong core coincides with the set of pairwise stable solutions, as shown e.g. in \cite{RothSotomayor1990}. However, for many-to-many stable matchings Sotomayor provided examples to show that the strong core and the set of pairwise stable solutions can be disjoint \cite{Sotomayor1999}. Konishi and \"Unver \cite{KonishiUnver2006} gave an example for a many-to-many stable matching problem under responsible preferences where the core is empty. (However, we shall remark that their example allowed preferences, where one agent finds another agent unacceptable alone, but bundled with another agent they together become acceptable for her.) In this paper we strengthen these results by giving an example for the emptiness of the core under the restricted domain of lexicographic preferences (where, by definition, an unacceptable agent can never be part of an acceptable bundle).

Lexicographic preferences for many-to-many assignment problems with one-sided preferences have been studied in \cite{Aziz_etal2019}, \cite{Cechlarova_etal2014}, and \cite{HosseiniLarson2019}.
However, we are not aware of any paper on lexicographic preferences for multiple partners matching problems.

\subsection{Our contribution}

First we provide an example showing that the strong core of many-to-many stable matching problems can be empty even for lexicographic preferences in Section \ref{sec:empty}. In Section \ref{sec:hardness} we prove hardness results. We show that deciding whether a many-to-many stable matching problem has non-empty core is NP-hard. We also prove that it is co-NP-complete to decide whether a given matching for a many-to-many stable matching problem is Pareto-optimal or whether it is in the strong core. We also show that finding a maximum size Pareto-optimal matching for the fixtures problem is NP-hard. On the positive side, in Section \ref{sec:easiness} we give efficient algorithms for finding a strong core solution for slightly adjusted capacities, and also for finding a half-matching that is in the strong core of fractional matchings for the stable fixtures problem. Finally, we show that finding a maximum size matching that is Pareto-optimal is possible efficiently for many-to-many problems.  

\section{Preliminaries}

First we define the \emph{two-sided many-to-many stable matching problem}, and the one-sided \emph{stable fixtures problem}.
Let $G=(N,E)$ denote the underlying graph, where the node set $N$ represents the agents and we have an undirected edge $ab\in E(G)$ if the two corresponding agents find each other mutually acceptable. Let $k(a)$ denote the (integer) capacity of agent $a$. We assume that every agent $a$ has linear preferences $>_a$ over the agents acceptable for her, where $b>_ac$ means that $a$ prefers $b$ to $c$. The solution of our problem is a \emph{matching}, that is a set of edges $M\subset E$ such that no quota is violated. If $M(a)$ denotes the set of edges incident to node $a$ in $M$ (that is the set of pair in which agent $a$ is involved in the solution), then the feasibility of the matching can be described with condition $|M(a)|\leq k(a)$ for every agent $a\in N$. If, for a matching $M$ the above condition is satisfied with equality then we say that the agent is \emph{saturated}, otherwise she is \emph{unsaturated}. When $G$ in non-bipartite then we get the stable fixtures problem \cite{IrvingScott2007}, and when $G$ is bipartite then we get the many-to-many stable matching problem, see e.g. \cite{BaiouBalinski2000}. 

The classical solution concept for these problems is (pairwise) stability. A matching $M$ is \emph{stable} if there is no blocking pair. A pair $ab\notin M$ is \emph{blocking}, if $a$ is either unsaturated, or there is $ac\in M$ such that $b>_a c$, and likewise, $b$ is either unsaturated or there is $bd\in M$ such that $a>_b d$. 

When all the capacities are unit then for two-sided problems we get the \emph{stable marriage problem}, and the one-sided case is called \emph{stable roommates problem}, as defined by Gale and Shapley \cite{GaleShapley1962}. Gale and Shapley gave an efficient algorithm for finding a stable matching for the marriage case, and demonstrated with an example that stable matching may not exists for the roommates case. Irving \cite{Irving1985} gave a linear time algorithm that can find a stable solution for the roommates problem, if one exists. The results are similar for the capacitated case, a stable solution always exists for two-sided problems and can be computed in linear time by a generalised Gale-Shapley type algorithm, see e.g. \cite{BaiouBalinski2000}. For the stable fixtures Irving and Scott \cite{IrvingScott2007} provided a linear time algorithm to find a stable solution, if one exists.     

In this paper we focus on the (strong) core and Pareto-optimality of the solutions, so we need to extend the preferences of the agents over the set of partners. Let $\succ_a$ denote the linear preferences of agent $a$ over the possible set of partners. We will assume that the preferences of the agents are lexicographic in the sense that they mostly care about their best partner, and then about their second best partner, and so on. 
Formally, we define the preference relation $\succ_a$ for agent $a$ over the sets $S,T\subset N$ the following way. Consider the characteristic vector of $S $ and $T$, denoted by $\chi_S$ and $\chi_T$, where the order of the coordinates are the same as the preference order of $a$ over the elements in $N$. Then $S\succ_a T$ if and only if $\chi_S$ is lexicographically greater than $\chi_T$. This definition can be easily extended to the fractional case, to be defined in Section \ref{sec:easiness}. Note that lexicographic preferences are strict by definition, so if $S\succeq_a T$ then either $S\succ_a T$ or $S=T$. 

A matching $M$ is in the \emph{core}, if there is no \emph{blocking coalition} $S$ with an alternative matching $M'$ on $S$ that is strictly preferred by all the members of $S$, that is $M'(a)\succ_a M(a)$ for every $a\in S$. A matching $M$ is in the \emph{strong core}, if there is no \emph{weakly blocking coalition} $S$ with alternative matching $M'$ on $S$ that is weakly preferred by all the members and strictly preferred by at least one member in $S$\footnote{Note that this means that in a weakly blocking coalition for lexicographic preferences everybody either gets strictly better partners or gets the same partners as before.}. Finally, a matching is \emph{Pareto-optimal} if it is not weakly blocked by $N$.

\subsection*{Examples for stability versus core property}\label{examples}

Here we provide two examples to demonstrate the differences in between stable matchings, strong core and Pareto-optimal matchings.

\subsubsection*{Example 1}
We have four agents on both sides, $A=\{a,b,c,d\}$ and $B=\{x,y,z,w\}$ having capacity two each, and with the following linear preferences on their potential partners:\\

\begin{center}
\begin{tabular}{rl|rl}
$a:$ & $x > z > w > y$ & $x:$ & $b > c > d > a$\\
$b:$ & $y > z > w > x$ & $y:$ & $a > c > d > b$\\
$c:$ & $x > y$ & $z:$ & $a > b$\\
$d:$ & $x > y$ & $w:$ & $a > b$\\
\end{tabular}
\end{center}

Here, the unique pairwise stable solution is $M=\{az,aw,bz,bw,cx,cy,dx,dy\}$, and the unique strong core solution is $M'=\{ax,ay,bx,by\}$ when we assume that agents have lexicographic preferences. Note that both of these solutions are Pareto-optimal.

The next, extended example shows that the unique stable solution may not even be Pareto-optimal.

\subsubsection*{Example 2}
We have five agents on both sides, $A=\{a,b,c,d,p\}$ and $B=\{x,y,z,w,q\}$ having capacity two each, and with the following linear preferences on their potential partners:\\

\begin{center}
\begin{tabular}{rl|rl}
$a:$ & $x > y > z > q > w$ & $x$: & $d > c > b > p > a$\\
$b:$ & $y > x > w > q > z$ & $y:$ & $c > d > a > p > b$\\
$c:$ & $z > w > x > q > y$ & $z:$ & $b > a > d > p > c$\\
$d:$ & $w > z > y > q > x$ & $w:$ & $a > b > c > p > d$\\
$p:$ & $x > y > z > w > q$ & $q:$ & $a > b > c > d > p$\\
\end{tabular}
\end{center}

Here, the unique pairwise stable solution is $M=\{ay, az, bx, bw, cw, cx, dz, dy, pq\}$, and the unique strong core solution is $M'=\{ax, aw, by, bz, cz, cy, dw, dx, pq\}$. Note that $M'$ also Pareto-dominates $M$, so no stable matching is Pareto-optimal for this example.\\

\subsubsection*{Example 3}

We have a stable fixtures problem with ten agents and the following preferences:
\begin{center}
\begin{tabular}{rl}
    $x_1:$ & $x_2>x_4>x_3$ \\
    $x_2:$ & $x_1>x_5>x_6$ \\
    $x_3:$ & $x_7>x_1$ \\
    $x_4:$ & $x_8>x_1$ \\
    $x_5:$ & $x_9>x_2$ \\
    $x_6:$ & $x_{10}>x_2$ \\
    $x_7:$ & $x_3>x_8$ \\
    $x_8:$ & $x_4>x_7$ \\
    $x_9:$ & $x_5>x_{10}$ \\
    $x_{10}:$ & $x_6>x_9$\\
\end{tabular}
\end{center}
The capacities of agents $x_1$ and $x_2$ are 2, the capacities of the others are 1. Here the only complete matching is $M=\{ x_1x_3,x_1x_4,x_2x_5,x_2x_6, x_7x_8, x_9x_{10} \}$, but the matching $M'=\{ x_1x_2, x_3x_7, x_4x_8, x_5x_9, x_6x_{10} \}$ Pareto-dominates it, so there is no complete Pareto-optimal matching in this instance. We will see in Section \ref{sec:easiness} that for a many-to-many stable matching problem a maximum size Pareto-optimal matching always exists and one can be found in polynomial time, but the same problem is NP-hard for the fixtures problem. 
\begin{figure}
    \centering
    \includegraphics[height=0.15\textheight]{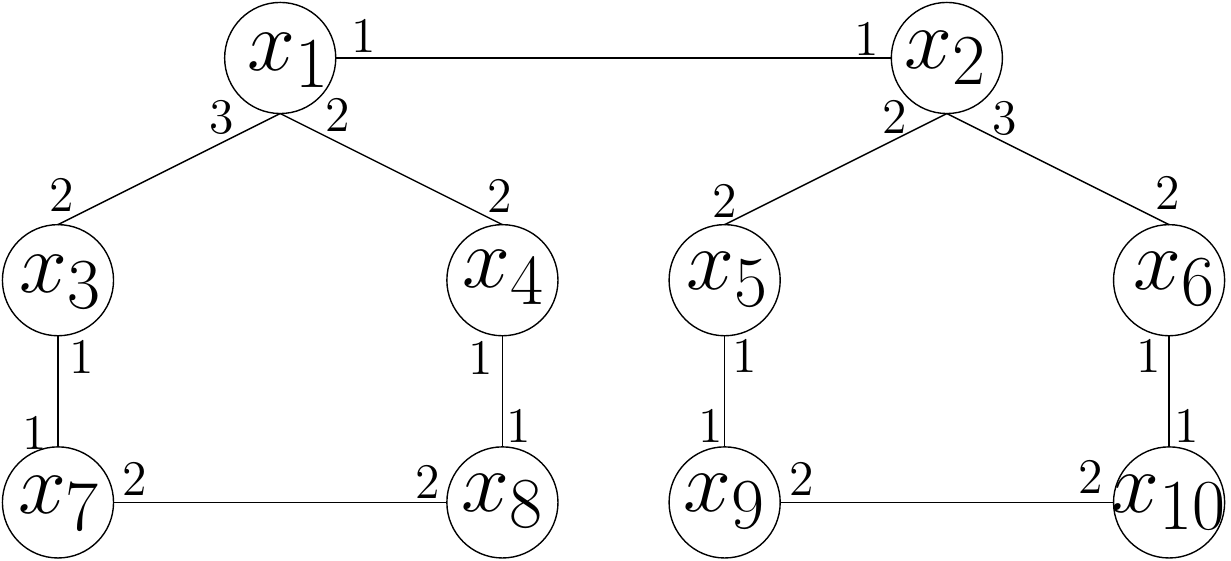}
    \caption{Example 3}
    \label{pareto}
\end{figure}

\section{Emptiness of the strong core}\label{sec:empty}

In this section we show that the strong core of a many-to-many stable matching problem may be empty, even under lexicographic preferences.

\begin{Th}
\label{ell}
The strong core of many-to-many matching markets may be empty, even if the preferences are lexicographic.
\end{Th}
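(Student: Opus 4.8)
The plan is to exhibit a single concrete many-to-many matching market, with explicitly listed agents, capacities, and lexicographic preferences, and then prove by exhaustive but structured case analysis that every feasible matching is weakly blocked. Since the strong core condition is a universal statement over all matchings, a counterexample is the natural route, and the real work is in the design of the instance rather than in any general argument. I would look for a small cyclic structure that forces an improvement cycle: intuitively, the example in the introduction ($a$--$b$--$c$--$d$ with a tempting $a$--$d$ deal) suggests building a configuration where any attempt to settle into a stable-looking matching creates a weakly blocking coalition that rotates partners around a cycle, and where the rotated matching is in turn weakly blocked by another coalition, so no matching can be stable against all coalitions simultaneously.

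First I would fix the agent set and capacities, aiming for the smallest instance that works — likely a handful of agents on each side with capacities around two, echoing Examples 1 and 2. The preferences must be engineered so that there are essentially only a few ``candidate'' matchings that are not immediately individually improvable, and then I would show these candidates partition into a cycle under weak domination: matching $M_1$ is weakly blocked by a coalition yielding $M_2$, which is weakly blocked yielding $M_3$, and so on back to $M_1$. Because lexicographic preferences are strict, a weakly blocking coalition (as the footnote notes) is one in which every member gets either a strictly better partner set or exactly the same one, with at least one strictly better; this rigidity is what I would exploit to rule out any matching not on my short list, by arguing that any other matching leaves some pair mutually wanting to match (an individual or pairwise improvement already giving a blocking coalition of size one or two).

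The key steps, in order, are: (i) present the instance (agents, capacities, lexicographic preferences derived from strict rankings over individual partners); (ii) reduce the space of matchings to a manageable set of candidates by showing every other matching admits a small blocking coalition; (iii) for each candidate, exhibit an explicit weakly blocking coalition and the alternative matching it enforces, verifying for each member that the new partner set is lexicographically at least as good and strictly better for at least one; (iv) conclude that since every matching is weakly blocked, the strong core is empty. I would organize step (iii) as a table or short list pairing each candidate matching with its blocker, so the verification is transparent.

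The main obstacle I anticipate is step (ii): controlling the combinatorial explosion of feasible matchings. With capacities larger than one the number of matchings grows quickly, so the preference lists must be crafted so that most partner combinations are ruled out cheaply — for instance, by making certain agents' top choices mutual so that any matching not honoring them is pairwise blocked, thereby collapsing the analysis onto a low-dimensional family. Getting this pruning tight while still leaving a genuine domination cycle among the survivors is the delicate part; it is easy to design an instance that either has a stable core element after all, or whose ``cycle'' secretly breaks because one of the alleged weak blocks fails the lexicographic at-least-as-good test for some member. I would therefore double-check each claimed weak block coordinate-by-coordinate against the characteristic-vector lexicographic definition given in the preliminaries.
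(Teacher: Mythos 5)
There is a genuine gap: your proposal is a plan for finding a proof, not a proof. The entire mathematical content of this theorem is the counterexample itself, and you never exhibit one --- no agent set, no capacities, no preference lists, no case analysis. You correctly identify that a counterexample is the only possible route and you correctly anticipate the main difficulty (pruning the space of feasible matchings so that the verification is finite and transparent), but you also concede that difficulty rather than resolve it: ``getting this pruning tight while still leaving a genuine domination cycle among the survivors is the delicate part,'' and it is exactly the part that is missing. Note that the existence of such an instance under \emph{lexicographic} preferences is precisely what is at stake --- earlier emptiness examples in the literature (e.g.\ Konishi and \"Unver) rely on preferences where an individually unacceptable agent becomes acceptable inside a bundle, which lexicographic preferences forbid by definition --- so one cannot wave at ``an instance like Examples 1 and 2'' and be done; those examples have nonempty strong cores.

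For comparison, the paper's construction also differs structurally from what you sketch. You envisage a short list of candidate matchings that weakly block each other in a rotation ($M_1$ blocked yielding $M_2$, \dots, back to $M_1$). The paper instead uses $12$ agents: a central $4$-cycle $\{ax,xb,by,ya\}$ among the capacity-two agents $a,b,x,y$; four capacity-one agents $c,d,u,v$ who are the top choices of their own second choices, so that leaving any of them unmatched creates an immediate weak block (this is the cheap pruning you hoped for); and four \emph{dummy} agents $a',b',x',y'$ whose only role is to force $a,b,x,y$ to be saturated and to supply weakly blocking coalitions in which the dummies' situation is unchanged --- exploiting exactly the rigidity of weak blocking under strict lexicographic preferences that you mention. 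The pruning then shows every strong-core candidate must be acyclic, must avoid the central cycle and the edges $\{au,bv,xd,yc\}$, leaving a \emph{unique} surviving matching, which the central $4$-cycle then strictly blocks. So the paper collapses the analysis to one candidate killed by one coalition, rather than a cycle of candidates; your step (iii) table would have to be replaced by this single-survivor argument or else you would need to actually engineer the rotation you describe, which you have not done. As it stands, the proposal cannot be credited as a proof of the theorem.
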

\begin{proof}

We construct an instance where the strong core is empty. The construction is the following: There are 12 agents, on one side we have agents $a$ and $b$ with capacity $2$, $c,d,x',y'$ with capacity one, and on the other side we have agents $x,y$ with capacity $2$ and $u,v,a',b'$ with capacity one. The preferences of the agents are shown as follows:
\begin{center}
\begin{tabular}{rl|rl}
$a:$ & $u > y > v > a'>x$ & $x:$ & $d > a > c > x'>b$\\
$b:$ & $v > x > u > b'>y$ & $y:$ & $c > b > d > y'>a$\\
$c:$ & $x > y$ & $u:$ & $b>a $\\
$d:$ & $y > x$ & $v:$ & $a> b$\\
$x':$ & $x$ & $a':$ & $a$\\
$y':$ & $y$ & $b':$ & $b$\\
\end{tabular}
\end{center}

\begin{figure}
\centering
\label{fig1}

\includegraphics[width=6cm]{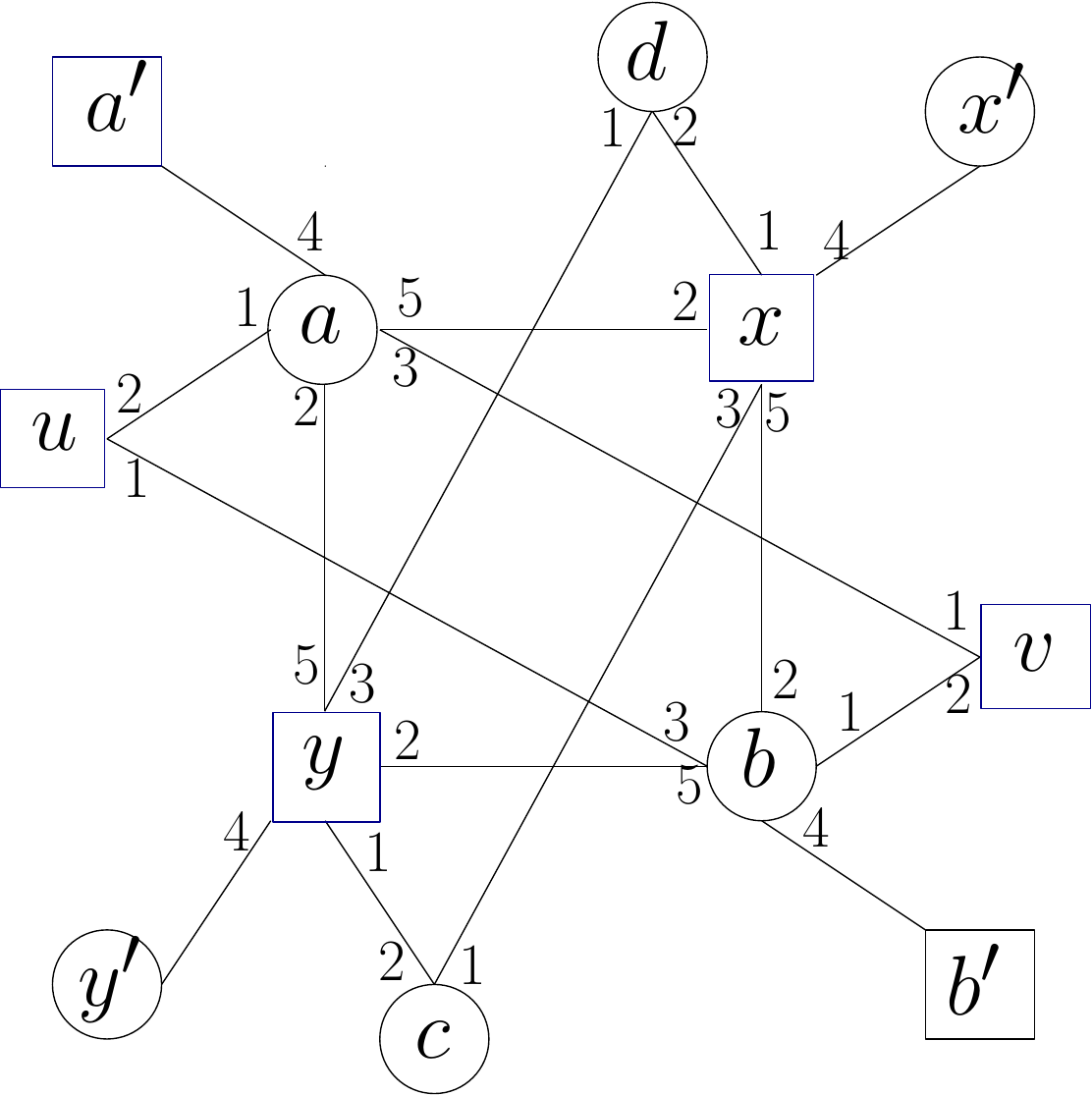}
\caption{An illustrative image of the counterexample used in Theorem \ref{ell}}
\end{figure}
Let us suppose that there is a matching $M$ in the strong core. Clearly, if any one of $\{ c,d,u,v \}$ is unmatched, then they form a blocking coalition with their second choice, since they are their second choice's best option. This also means that the middle four cycle $C=\{ ax,xb,by,ya\}$ cannot be included either, since they are the only possible partners of $\{ c,d,u,v\}$. This also means that any possible matching $M$ in the strong core has to be acyclic.

Observe that $a,b,x,y$ must be saturated (otherwise they would block with their dummy partner $a',b',x'$ or $y'$ and the rest of the acyclic component containing them.

Suppose that there is an edge of the four cycle $C$ that is included in $M$, suppose by symmetry it is $ax$. Then $aa'$ would block it with the rest of $a$'s component in $M$, because $a$ and $x$ cannot be connected with a path. So no edges of $C$ can be in $M$.

If any of $\{au, bv, xd, yc\}$ is included in the matching, then $bu$, $av$, $yc$ or $xc$ would block with the rest of the given component, respectively.  
This means that the only possible choice left for $M$ is $\{ av,aa',bu,bb',yd,yy',xc,xx'\}$, but then the four cycle $C$ in the middle would block, a contradiction.
So the strong core of the instance is indeed empty.

\end{proof}

\section{Hardness results}\label{sec:hardness}

In this section we prove NP-hardness results.

\subsection{Deciding the non-emptiness of the strong core}

\begin{Th}
\label{np-core}
Deciding whether the strong core of a many-to-many stable matching problem is non-empty is NP-hard under lexicographic preferences, even if each capacity is at most two. 
\end{Th}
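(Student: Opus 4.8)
The plan is to reduce from a known NP-hard problem, using the gadget from Theorem~\ref{ell} as a building block. Since Theorem~\ref{ell} already constructs a 12-agent instance whose strong core is empty, the natural strategy is to make a copy of (most of) that gadget whose internal consistency --- and hence the non-emptiness of the whole strong core --- is controlled by some external Boolean choice. In other words, I would design a ``switchable'' version of the counterexample: depending on how a satisfying assignment sets a variable, the dangerous four-cycle $C=\{ax,xb,by,ya\}$ either becomes blockable in a way that restores a strong core solution, or remains forced-empty exactly as in Theorem~\ref{ell}.

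\medskip
\noindent\textbf{Choice of source problem.} I would reduce from a variant of satisfiability or a constraint problem that naturally produces ``exactly one of several forced configurations'' behaviour, since the emptiness in Theorem~\ref{ell} came from a parity-like obstruction (every candidate matching eventually forced the blocking cycle $C$ back in). A good candidate is a restricted 3-SAT (e.g. with each variable in a bounded number of clauses so that capacities stay bounded by two) or an orientation/covering problem on graphs. The goal is: build one satisfiability gadget per variable and one per clause, connect them with consistency edges, and arrange that a globally consistent ``escape'' from every embedded empty-core gadget exists \emph{if and only if} the formula is satisfiable. Capacities must be kept at most two throughout, which is why the source problem should have bounded occurrence.

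\medskip
\noindent\textbf{Key steps, in order.} First I would describe variable gadgets: each variable $v$ is represented by a small many-to-many component with two stable-looking states (``true'' and ``false''), enforced by lexicographic preferences so that in any strong core matching the gadget must commit to exactly one of the two states, and each state is internally strong-core-consistent. Second, clause gadgets: for each clause I would attach a copy of the empty-core construction from Theorem~\ref{ell}, but with its critical agents' preference lists augmented by a ``release valve'' edge to the literal agents of the variable gadgets appearing in that clause. The release valve is set up so that \emph{if} at least one literal is satisfied, the critical agent in the clause gadget can be matched to the satisfied literal at a high enough priority to break the forced cycle $C$ (giving that clause gadget a valid strong core matching); but \emph{if} all three literals are false, the clause gadget degenerates to exactly the configuration of Theorem~\ref{ell} and admits no strong core solution. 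Third, consistency edges between variable gadgets and the literal-ports so that a literal-port can ``send a release'' only when the variable is actually set to satisfy it. Fourth, I would argue the two directions of correctness: (i) a satisfying assignment yields a strong core matching by composing each variable's chosen state with each clause's released (non-empty) configuration; (ii) a strong core matching induces, by reading off the committed state of each variable gadget, an assignment that must satisfy every clause, because any unsatisfied clause would leave its embedded Theorem~\ref{ell} gadget without a strong core solution, contradicting that the global matching lies in the strong core. Finally, since Theorem~\ref{ell} is only an NP-hardness (not membership) claim, I only need the reduction to be polynomial-time and one-directional-equivalent; no co-NP or verification argument is required here.

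\medskip
\noindent\textbf{Main obstacle.} The hard part will be designing the ``release valve'' so that it is \emph{locally} decoupled --- i.e. the only way a clause gadget can acquire a strong core matching is through a genuinely satisfied literal, and spurious strong core solutions cannot sneak in by exploiting interactions between gadgets (for instance by partial matchings that create new blocking coalitions crossing gadget boundaries, or by an unsatisfied clause's critical agents ``borrowing'' a release from a satisfied neighbouring clause). Controlling these cross-gadget blocking coalitions under lexicographic preferences, while simultaneously keeping every capacity at most two and ensuring the embedded empty-core behaviour is preserved exactly when no literal fires, is the delicate engineering at the center of the proof. I would handle it by using dedicated dummy agents (analogous to $a',b',x',y',u,v$ in Theorem~\ref{ell}) to saturate and ``seal off'' each gadget's boundary, so that the only admissible inter-gadget interaction is the intended release edge, thereby forcing all blocking analysis to stay internal to a single gadget.
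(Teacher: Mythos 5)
Your proposal is a reduction \emph{template}, not a reduction: you name a source problem only up to ``some bounded-occurrence SAT variant,'' and the variable gadget, the clause gadget, the preference lists, and both directions of the correctness argument are all deferred. For a theorem like \ref{np-core} the entire mathematical content \emph{is} the explicit construction together with the blocking-coalition case analysis, and you have supplied neither. Two specific steps are unsupported and genuinely liable to fail. First, the ``release valve'': the emptiness argument in Theorem \ref{ell} is a global case analysis over \emph{all} matchings of that 12-agent instance (acyclicity forced, saturation of $a,b,x,y$ forced, each candidate edge set eliminated in turn). Inserting a new high-priority edge into one agent's list invalidates that entire analysis; you would have to redo it to show the modified gadget has a strong core solution exactly when the valve fires, and separately show it has none when the valve is closed \emph{even though the gadget now sits inside a larger instance} where weakly blocking coalitions can cross its boundary. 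Second, your variable gadget is required to admit exactly two strong core configurations in any global solution. Under lexicographic preferences the strong core is very fragile --- a coalition in which one agent strictly improves and all others keep exactly their old partner sets already weakly blocks --- so ``exactly two internally consistent states, and no others'' is a nontrivial object that you assert but never exhibit. You correctly identify cross-gadget coalitions as the main obstacle, but naming the obstacle and gesturing at dummy agents does not resolve it.

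It is worth noting how the paper avoids this engineering burden: it reduces from the restricted \textsc{com-smti} problem of Manlove et al.\ (complete weakly stable matching where all ties are single ties on one side) rather than from SAT, and it uses the Theorem \ref{ell} gadget only \emph{once}, globally, not per clause. A single copy $G$ is attached via one special agent $g$ whose edge to $a\in G$ must lie in any strong core matching (otherwise the empty core of $G$ bites), and $g$ is appended to the tail of every man's preference list; this forces every man to be matched to a woman, i.e.\ it forces completeness of the induced matching, while a small four-agent gadget $G_i$ with capacities at most two resolves each tied woman. That design keeps all blocking analysis local and short, whereas your per-clause embedding of the empty-core gadget multiplies the verification burden by the number of clauses and manufactures precisely the cross-gadget interactions you flag as delicate. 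Your high-level idea --- use the Theorem \ref{ell} instance as a switch whose activation is controlled externally --- is the same seed the paper grows from, and the plan could plausibly be completed, but as submitted it is a strategy statement with the proof missing.
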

\begin{proof}
We reduce from an NP-complete special version of the \textsc{com-smti} problem, which was shown to be NP-hard by Manlove et al.\cite{Manlove_etal2002}. This problem is a special instance of the stable marriage problem with ties and incomplete preference lists such that there are no ties in the preferences of the men $U=\{ u_1,..,u_n\}$ and the set of woman can be partitioned into two parts $W=W^s\cup W^t=\{ w_1,..,w_k\} \cup \{ w_{k+1},...,w_n\}$ such that the preference lists of the woman in $W^s$ have no ties and the preference lists of the woman in $W^t$ consists of only a single tie. The task is to find a complete weakly stable matching, that is a matching $M$ that pairs every man and woman with someone and there is no pair $(m,w)\notin M$ such that they both strictly prefer each other to their partner in $M$.

\begin{figure}
    \centering
    \includegraphics[height=0.2\textheight]{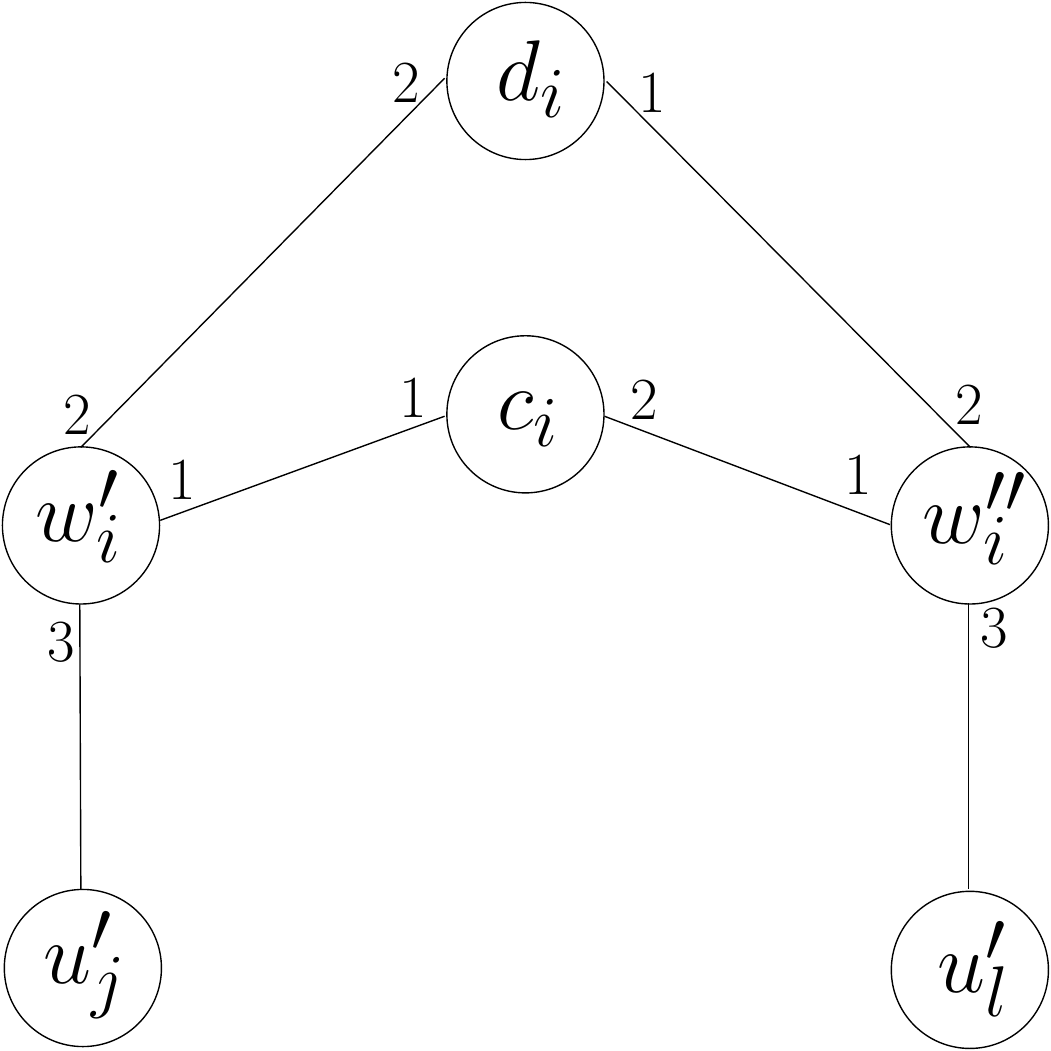}
    \caption{The gadget $G_i$ for theorem \ref{np-core} if the tie in $w_i$'s list was $u_j\sim u_l$ }
    \label{woman}
\end{figure}
Suppose that we have an instance $I$ of the above problem. We construct an instance $I'$ of the multiple partner stable matching problem such that the strong core is nonempty if and only if there is a complete weakly stable matching in $I$.

First of all, for every man $u_i\in U$ we create a vertex $u'_i$ with capacity one, and for every woman $w_i\in W^s$ we create a vertex $w_i'$ with capacity one. Denote these two sets by $U'$ and $W^{s'}$. Then, for every woman $w_i\in W^t$ we create a gadget $G_i$ illustrated in Figure \ref{woman}. We create four vertices: $w_i',w_i''$ and $c_i$ having capacity two and $d_i$ having capacity one. $w_i'$ is connected to one man in $w_i$'s preference list and $w_i''$ is connected to the other. The preferences are the following:

\begin{center}
\begin{tabular}{rl}
$c_i:$ & $w_i'>w_i''$ \\
$d_i:$ & $w_i''>w_i'$ \\
$w_i':$ & $c_i>d_i>u_j'$ \\
$w_i'':$ & $c_i>d_i>u_l'$ \\
\end{tabular}
\end{center}
Finally, we add a gadget $G$, which is just a copy of the counterexample from Figure \ref{ell} and a special agent $g$ with capacity one.

The preferences of the agents in $U'$ are the same just over the agents $w_i'$ instead of $w_i$, except if their was a woman $w_i\in W^t$ in their preference list, then we substitute $w_i$ with the appropriate copy from $\{ w_i',w_i''\}$. Finally, we add the special agent $g$ to the end of all of their preference lists.

Similarly, for each $w_i'\in W^{s'}$, the preference lists are the same with $u_i'$-s instead of $u_i$-s.
The agents in the gadget $G$ have the same  preferences, except that we add $g$ to the beginning of $a\in G$'s list.
Finally, the preference lists of $g$ has the agents in $U'$ in arbitrary order followed by $a\in G$ in the end.

Now let us suppose that we have a complete weakly stable matching $M$ in $I$. We create a matching $M'$ in $I'$ by adding an edge $u_i'w_j'$ or $u_i'w_j''$ (the one which exists) for each $u_iw_j\in M$. Also for each gadget $G_i$ we add the edges $w_i'c_i$ and $w_i''c_i$ to $M'$. If the partner of $w_i\in W^s$ was $u_j$, then we add $w_i''d_i$, if it was $u_l$, then we add $w_i'd_i$. Finally, we add the edges  $ag$ and $\{ av,bu,yd,yy',xc,xx', bb'\}$ to $M'$. 

We show that $M'$ is in the strong core. Let us suppose there is a blocking coalition $C$ for $M'$. If there is a vertex $v_i\in \{ w_i',w_i'',c_i,d_i\}$ from a gadget $G_i$ in $C$, then all of them are in $C$, since if $v_i=d_i$, then $w_i'$ or $w_i''\in C$, so their favorite partner $c_i$ is also in $C$ and so is the other copy of $w_i$. Similarly if $w_i'$ or $w_i''$ or $c_i\in C$, then all of them are in $C$ and the two copies of $w_i$ get the same partner, so none of them can achieve a strictly better situation. So no agents from $G_i$ can improve their situation.

If a man $u_i'$ is strictly better off in $C$, then she has to have a better partner and also she has to be at least as good a partner for her. She cannot be strictly better, since $M$ was weakly stable. So the partner $w_i$ has to be from $W^t$. But then, the corresponding copy of $w_i$ was matched to $c_i$ and $d_i$ in $M'$, both of which it prefers to $u_i'$, so they cannot be paired in a blocking coalition, a contradiction. 

If an agent $w_i'\in W^{s'}$ gets a strictly better partner in $C$, then she and her partner would form a blocking pair to $M$, a contradiction.

Special agent $g$ cannot get a better partner in $C$, because she is the worst choice for every other possible partner other than $a$, and every one of them is at full capacity, since $M$ was a complete matching. 

Finally, it is easy to check, that there are no blocking coalition in the gadget $G$ to $M'$ either, so $M'$ is in the strong core.

For the other direction suppose that $M'$ is in the strong core of $I'$. This implies that $ga\in M'$, since there is no strong core solution among the agents in $G$. Therefore every agent in $U'$ must be matched to someone in $W'$, because otherwise they would block with $g$. 

Now, we create $M$ the following way: for each $u_i'\in U'$ we assign $u_i$ the woman corresponding to the partner of $u_i'$ in $M'$. To see that no two man gets the same partner, suppose that $u_j$ and $u_l$ does. Then, $u_j'w_i'$ and $u_l'w_i''$ are both in $M'$ for a woman in $W^{t'}$. But this means that $c_i$ and $d_i$ cannot be both saturated, so one of them blocks with $g$ and the rest of its (acyclic) component in $M'$.

Since every man $u_i$ is matched and to different partners, it follows that $M$ is a complete matching.

Now suppose there is a strictly blocking pair $(u_i,w_i)$. Then $w_i\in W^s$ and $\{ u_i',w_i'\}$ would form a blocking coalition for $M'$, a contradiction.
 
So $M$ is a complete, and weakly stable matching.
\end{proof}

\subsection{Checking Pareto-optimality of a matching}

First we show that checking Pareto-optimality is co-NP-complete, and then we prove that this implies a similar hardness result for the problem of checking the strong core property. We say that a matching \emph{complete} if every vertex is saturated

\begin{Th}
\label{pareto1}
Deciding whether a given matching is Pareto-optimal for the many-to-many stable matching problem under lexicographic preferences is co-NP-complete, even for complete matchings.
\end{Th}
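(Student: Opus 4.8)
The plan is to prove both that the problem lies in co-NP and that it is co-NP-hard, the latter by a reduction to the complement problem of detecting that a given complete matching $M$ is \emph{not} Pareto-optimal. Membership is the easy half: a certificate that $M$ is not Pareto-optimal is a matching $M'$ on $N$ that weakly dominates it. Given $M'$, one verifies in polynomial time that $M'$ respects every capacity and that, for each agent $a$, the characteristic vector $\chi_{M'(a)}$ is lexicographically at least $\chi_{M(a)}$, with strict inequality for at least one $a$. Each lexicographic comparison costs $O(k(a))$, so the whole check is polynomial; hence non-Pareto-optimality is in NP and Pareto-optimality is in co-NP.

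For hardness I would reduce from an NP-complete satisfiability problem (3-SAT, or one of the restricted variants already invoked in the paper). From a formula $\Phi$ I would build a many-to-many instance $I'$ together with a \emph{complete} matching $M$ so that $M$ admits a weakly dominating $M'$ if and only if $\Phi$ is satisfiable. The construction rests on three kinds of gadgets: a \emph{variable gadget} for each variable, admitting exactly two alternative local configurations that are each weakly as good as $M$ for its agents (encoding \textbf{true}/\textbf{false}); a \emph{clause gadget} whose agents can be rearranged without loss only when at least one incident literal-agent sits in its satisfying configuration; and a single \emph{trigger agent} $t$ whose only route to a strict improvement forces all clause gadgets into a simultaneously satisfied state. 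In $M$ every agent receives its default partners so that $M$ saturates everyone, and, following the dummy-partner device of Theorem \ref{ell}, each agent is given a pinned top-ranked private partner so that any weakly-improving bundle keeps its first coordinate and therefore must remain of full size; this forces every weakly-dominating $M'$ to keep all agents saturated and to act only by permuting partners within the gadget skeleton.

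Correctness then splits into two directions. If $\Phi$ has a satisfying assignment, I set each variable gadget to the corresponding configuration and flip the trigger agent $t$; because every clause is satisfied, each clause gadget can be reconfigured keeping all its agents indifferent, $t$ strictly improves, and no agent is made worse off, so the resulting $M'$ weakly dominates $M$. Conversely, given any weakly-dominating $M'$, the saturation and lexicographic constraints confine the rearrangement to the gadgets; tracing which agents can possibly gain shows that a strict improvement somewhere forces the trigger $t$ to be flipped, and $t$'s improvement is feasible only when the induced variable configuration keeps every clause gadget loss-free, i.e.\ when the corresponding assignment satisfies $\Phi$.

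The main obstacle is the gadget design under the joint completeness and lexicographic constraints. Because $M$ is complete and preferences are lexicographic, ``weakly better'' is brittle: an agent who loses even a bottom-ranked partner without full compensation becomes strictly worse off, so every reconfiguration must preserve each agent's bundle \emph{exactly} or improve it, while still leaving enough freedom to encode truth values and clause satisfaction. The delicate point is to prevent local improvements from being harvested in isolation, so that the only Pareto improvements are the ``global'' ones corresponding to satisfying assignments; this is precisely what the trigger agent $t$ and the careful capacity balancing (all capacities kept to a small constant) are meant to enforce. I would finally remark that, since the instance uses lexicographic preferences, the same co-NP-hardness transfers to the responsive and substitutable domains.
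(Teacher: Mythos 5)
Your co-NP membership argument is fine and matches the paper's. The hardness half, however, has a genuine gap, and it is not merely that the gadgets are left unconstructed (''the main obstacle is the gadget design'' concedes the actual content of the proof is missing). The central mechanism you propose is incompatible with strict lexicographic preferences. Since $\succ_a$ is a strict order on bundles, ''weakly as good'' collapses to ''strictly better or \emph{identical}'' (the paper makes this explicit in a footnote). Consequently a variable gadget cannot have ''two alternative local configurations that are each weakly as good as $M$'': any agent whose partner set changes in a weakly dominating $M'$ must strictly improve, so there are no indifference-preserving reconfigurations at all, and your clause gadgets cannot be ''rearranged keeping all agents indifferent'' unless nothing in them moves. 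A SAT-style reduction built on two interchangeable truth states per variable therefore cannot work as described; every encoding of a truth value must make all touched agents strictly better off, which destroys the symmetry between ''true'' and ''false'' configurations. A secondary error: pinning a top-ranked private partner does \emph{not} force saturation, since under lexicographic preferences an agent can strictly improve while shrinking her bundle (e.g.\ with preference $p>x>y>z$, the bundle $\{p,x\}$ beats $\{p,y,z\}$), so your claim that every weakly dominating $M'$ ''must remain of full size'' does not follow.

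The paper avoids these problems by reducing from \textsc{Exact-3-Cover} rather than satisfiability, and by designing the instance so that improvement is all-or-nothing: five interlocked gadget pairs ($A\cup B$, $C\cup D$, $P\cup Q$, $S\cup T$, $U\cup V$) are arranged so that if \emph{any} agent strictly improves, a cascade forces all agents in $A\cup B\cup C\cup D$ to switch to their top partners, which frees exactly $n$ units of capacity in $C$ and $D$; absorbing all of $P$ and $Q$ through $S$ and $T$ within that budget is then feasible precisely when one can select $n$ sets $Y_j$ covering each item once, i.e.\ an exact 3-cover. Every agent who moves gets strictly better (indeed, top) partners, which is exactly the discipline strict lexicographic preferences impose. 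If you want to salvage your outline, you would need to replace indifference-based states with this kind of ''everyone who moves strictly improves'' cascade, at which point a covering-type source problem is the natural fit, not 3-SAT.
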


\begin{proof}
The problem is in co-NP, since checking that an alternative matching $M'$ Pareto-dominates $M$ can be done efficiently. We reduce from {\sc Exact-3-Cover}, where we are given a set of $3n$ items $X=\{x_1, x_2, \dots , x_{3n}\}$ and a set of $m$ 3-sets, $\mathcal{Y}=\{Y_1, Y_2, \dots ,Y_m\}$, where each $Y_j$ contains 3 items from $X$ and the decision question is whether there exists a subset $\mathcal{Y'}\subset\mathcal{Y}$ of size $n$ that admits all the elements of $X$ exactly ones. Given an instance $I$ of {\sc Exact-3-Cover}, as described above, we create an instance $I'$ of many-to-many stable matching problem as follows. We will have five gadgets, each with two sets of agents, $A\cup B$, $C\cup D$, $P\cup Q$, $S\cup T$ and $U\cup V$. More specifically, the set of agents are as follows.

\begin{center}
\begin{tabular}{l|l}
$A=\{a_1, a_2, \dots , a_{3n}\}$ & $B=\{b_1, b_2, \dots , b_{3n}\}$\\
$C=\{c_1, c_2, \dots , c_{n}\}$ & $D=\{d_1, d_2, \dots , d_{n}\}$\\
$P=\{p_1, p_2, \dots , p_{3n}\}$ & $Q=\{q_1, q_2, \dots , q_{3n}\}$\\
$S=\{s_1, s_2, \dots , s_{m}\}$ & $T=\{t_1, t_2, \dots , t_{m}\}$\\
$U=\{u_1, u_2, \dots , u_{4m}\}$ & $V=\{v_1, v_2, \dots , v_{4m}\}$\\
\end{tabular}
\end{center}

Let the capacity of every agent be 2 in $A\cup B$, 3 in $C\cup D$, 1 in $P\cup Q$, 4 in $S\cup T$ and 1 in $U\cup V$. Finally, we describe the linear orders of the agents on their acceptable partners. Here, for any set $E$, $[E]$ denotes the elements of $E$ in the order of the elements' indices. Furthermore, we define $Q_j=\cup\{q_i:x_i\in Y_j\}$ and $S_i=\cup\{s_j:x_i\in Y_j\}$, and similarly $P_j=\cup\{p_i:x_i\in Y_j\}$ and $T_i=\cup\{t_j:x_i\in Y_j\}$.

\footnotesize
\begin{center}
\begin{tabular}{rl|rl}
$a_1:$ & $b_1> q_1> d_{1}> b_{3n}$  & $b_1:$ & $a_2> p_1> c_{1}> a_{1}$ \\
 & $\dots $ & $\dots $ & \\
$a_i:$ & $b_i> q_i> d_{\lfloor (i+2)/3\rfloor}> b_{i-1}$  & $b_i:$ & $a_{i+1}> p_i> c_{\lfloor (i+2)/3\rfloor}> a_{i}$ \\
 & $\dots $ & $\dots $ & \\
$a_{3n}:$ & $b_{3n}> q_{3n}> d_{n}> b_{3n-1}$  & $b_{3n}:$ & $a_1> p_{3n}> c_{n}> a_{3n}$\\
\hline
$c_1:$ & $d_1> b_1> b_2> b_3> d_n> [T]$  & $d_1:$ & $c_2> a_1> a_2> a_3> c_1> [S]$ \\
 & $\dots $ & $\dots $ & \\
$c_i:$ & $d_i> b_{3i-2}> b_{3i-1}> b_{3i}> d_{i-1}> [T]$  & $d_i:$ & $c_{i+1}> a_{3i-2}> a_{3i-1}> a_{3i}> c_i> [S]$ \\
 & $\dots $ & $\dots $ & \\
$c_n:$ & $d_n> b_{3n-2}> b_{3n-1}> b_{3n}> d_{n-1}> [T]$  & $d_n:$ & $c_{1}> a_{3n-2}> a_{3(n-1)+2}> a_{3n}> c_n> [S]$ \\
\hline
$p_1:$ & $[T_1]> b_1$  & $q_1:$ & $[S_1]> a_1$ \\
 & $\dots $ & $\dots $ & \\
$p_i:$ & $[T_i]> b_i$  & $q_i:$ & $[S_i]> a_i$ \\
 & $\dots $ & $\dots $ & \\
$p_n:$ & $[T_n]> b_n$  & $q_n:$ & $[S_n]> a_n$ \\
\hline
$s_1:$ & $[D]> v_1> v_2> v_3> v_4> [Q_1]$  & $t_1:$ & $[C]> u_1> u_2> u_3> u_4> [P_1]$ \\
 & $\dots $ & $\dots $ & \\
$s_j:$ & $[D]> v_{4j-3}> v_{4j-2}> v_{4j-1}> v_{4j}> [Q_j]$  & $t_j:$ & $[C]> u_{4j-3}> u_{4j-2}> u_{4j-1}> u_{4j}> [P_j]$ \\
 & $\dots $ & $\dots $ & \\
$s_m:$ & $[D]> v_{4m-3}> v_{4m-2}> v_{4m-1}> v_{4m}> [Q_n]$  & $t_m:$ & $[C]> u_{4m-3}> u_{4m-2}> u_{4m-1}> u_{4m}> [P_m]$ \\
\hline
 $u_1:$ & $v_1> t_1$  & $v_1:$ & $u_1> s_1$ \\
& $\dots $ & $\dots $ & \\
 $u_j:$ & $v_j> t_{\lfloor (j+3)/4\rfloor}$  & $v_j:$ & $u_j> s_{\lfloor (j+3)/4\rfloor}$ \\
& $\dots $ & $\dots $ & \\
 $u_{4m}:$ & $v_{4m}> t_{m}$  & $v_{4m}:$ & $u_{4m}> s_m$ \\
\end{tabular}
\end{center}
\normalsize

We create a matching $M$ in $I'$ as follows. Let each agent in $A$ and $B$ be matched with their acceptable partners in $D\cup Q$ and in $C\cup P$, respectively. Furthermore, each agent in $S$ is matched with all of her four acceptable agents in $V$, similarly, each agent in $T$ is matched with all of his four acceptable agents in $U$.  We depict the accessibility graph of $I'$ in Figure \ref{lex} with regard to the main sets of agents, where the solid edges mark that all of the  mutually acceptable pairs between the two corresponding sets belong to $M$ and the dashed edges denote when no edge between the corresponding sets belongs to $M$.

\begin{figure}
\begin{center}
\scalebox{0.4}{\includegraphics{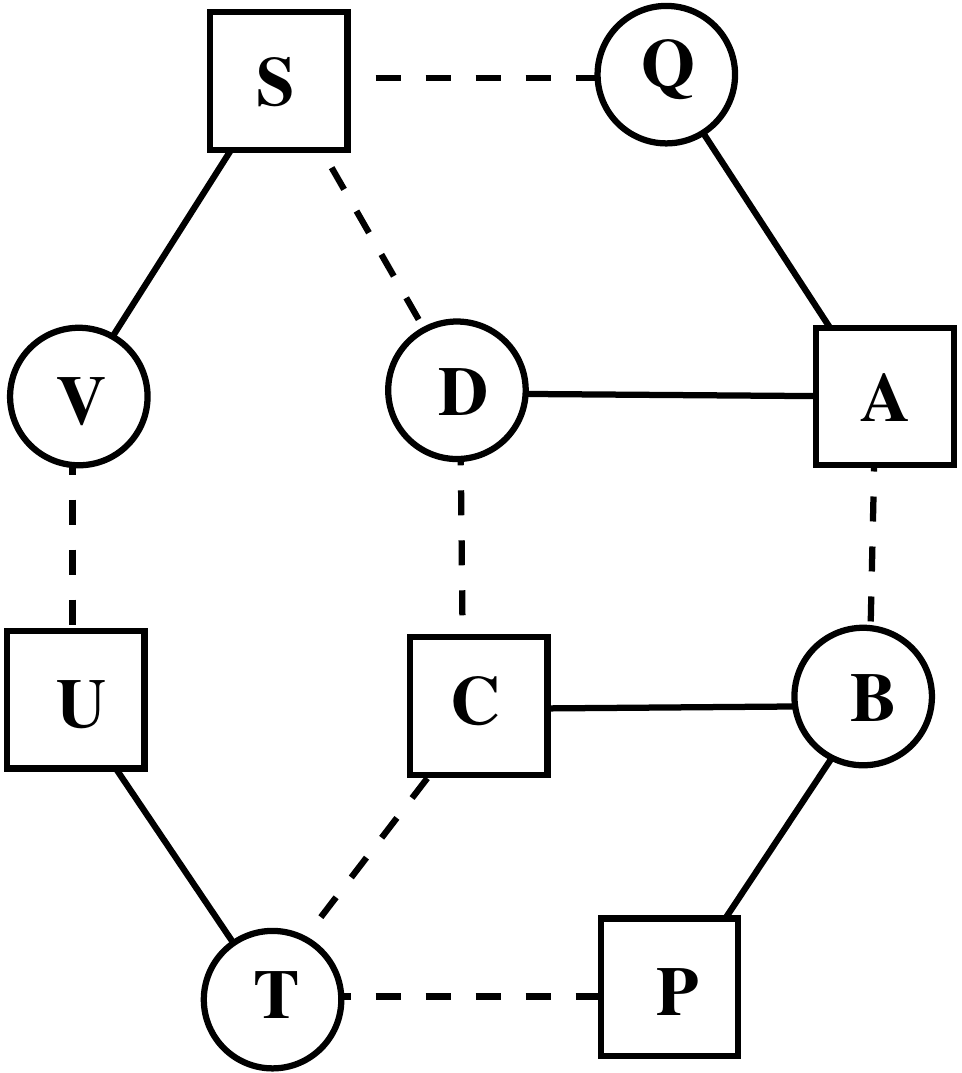}}
\caption{The mutually acceptable pairs between the main sets of agents, where the solid edges denote the projection of $M$}\label{lex}
\end{center}
\end{figure}

Now we shall prove that $I$ has an exact 3-cover if an only if matching $M$ is not Pareto-optimal in $I'$.

First, let us suppose that we have an exact 3-cover $\mathcal{Y'}$ in $I$. We create a matching $M'$ in $I'$ that Pareto dominates $M$ in the following way. In $M'$ we match each agent in $A$ to her two acceptable partners in $B$, which implies that we also match each agent in $B$ to his two acceptable partners in $A$. Likewise, we match each agent in $C$ to her two acceptable partners in $D$, which implies that we also match each agent in $B$ to his two acceptable partners in $A$. For the rest of the agents we create $M'$ according to the 3-cover $\mathcal{Y'}$ as follows. If $Y_j\in \mathcal{Y'}$ then we match $s_j$ to the three agents in $Q_j$ and also to an arbitrary agent in $D$, and similarly, we match $t_j$ to the three agents in $P_j$ and also to an arbitrary agent in $C$, and finally we also match $u_{4(j-1)+k}$ to $v_{4(j-1)+k}$ for every $k\in\{1, 2, 3, 4\}$. For those agents in $s_j\in S$ and $t_j\in T$, where $Y_j\notin Y'$, we keep the edges of $M$. It is easy to see that all the agents that changed partners in $I'$ improved according to their lexicographic preferences, since all of them become matched to their best potential partner in $M'$.

In the other direction, let us suppose that $M$ is not-Pareto optimal, so there is an alternative matching $M^*$ that Pareto-dominates it. We shall prove that $I$ has an exact 3-cover. First we note that if any agent in $A\cup B\cup C\cup D$ has a different partner in $M^*$ than in $M$ (thus necessarily improves) then the matching between sets $A$ and $B$ and also between $C$ and $D$ must be complete, as we had in $M'$. This also implies that all the agents in both $P$ and $Q$ must get new partners in $M^*$ from the sets $T$ and $S$, respectively. However, this is only possible if at least $n$ agents form both $S$ and $T$ get also new partners from the sets $D$ and $C$, respectively. But the agents in $C$ and $D$ have remaining capacity one each, so they should become matched with exactly $n$ agents from each of $T$ and $S$, respectively, so these are the only $2n$ agents from these sets that can change partners and help improve to those in $P$ and $Q$. To summarise, if these agents all improve then we must be able to choose an exact 3-cover by adding $Y_j$ to $\mathcal{Y'}$ if $s_j$ has improved in $M^*$. What remained is to show that the improvement of any other agent outside the set $A\cup B\cup C\cup D$ would also lead to the same effect. Indeed, if any agent in $U\cup V$ improves in $M^*$ then her/his partner in $M$ from $S\cup T$ must also improve and this is only possible is the latter agent gets matched with someone from $A\cup B\cup C\cup D$. The same applies if any agent in $P\cup Q$ would improve. Thus we can conclude that the improvement of any agent in $I'$ implies that all agents in $A\cup B\cup C\cup D$ must improve and thus we are able to find an exact 3-cover in $I$.
\end{proof}

\subsection{Finding a complete Pareto-optimal matching}

In this section we show that in the stable fixtures case, the problem of finding a maximum size Pareto-optimal matching is NP-hard. We prove this for complete matchings.

\begin{Th}
Deciding whether there exists a complete Pareto-optimal matching for the stable fixtures problem under lexicographic preferences is NP-hard, even if each capacity is at most 4.

\end{Th}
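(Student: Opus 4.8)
The plan is to reduce from an NP-complete problem, and the natural candidate is again \textsc{Exact-3-Cover} (or a restricted variant of \textsc{3-SAT} or \textsc{3-Dimensional-Matching}), since the structure of Example~3 already exhibits the key tension: a ``complete'' matching using the capacities fully can be globally dominated by a sparser-looking matching that pairs up agents through their top choices. The goal is to engineer an instance where completeness forces agents into a rigid configuration, while the existence of a Pareto-improving matching corresponds exactly to the existence of a cover (or a satisfying assignment). First I would set up item-gadgets and set-gadgets, where each item must be ``covered'' through one of its containing sets, and the capacity bounds of size at most $4$ are used precisely to control how many partners a set-agent can absorb.

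The core idea I would carry over from Example~3 is the \emph{blocking cycle} mechanism: one builds a long ``rim'' structure (analogous to the $x_7x_8$, $x_9x_{10}$ pairs and the central $x_1, x_2$) so that the only way to saturate every agent is the ``complete'' matching $M$ in which high-capacity hub agents swallow their low-priority neighbours, whereas a Pareto-dominating matching reroutes these agents into their top-choice partners. The first step is to verify that in the constructed instance there is essentially a \emph{unique} complete matching $M$, or at least that every complete matching is Pareto-dominated unless a combinatorial selection (the cover) exists. I would then argue that the existence of a Pareto-optimal \emph{and} complete matching is equivalent to being able to simultaneously (i) saturate all agents and (ii) leave no group of agents who could all jointly switch to strictly better partners — and show that this simultaneous feasibility encodes the exact cover condition.

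The key structural steps, in order, are: (1)~define the gadgets with capacities $\le 4$ and specify lexicographic preferences so that each agent's top choice is ``across'' a potential improving move; (2)~show that any complete matching must use the forced edges, so completeness pins down most of the matching; (3)~show the forward direction — from a cover, construct a complete matching and verify no coalition (in particular not $N$) can weakly block it, hence it is Pareto-optimal; (4)~show the reverse direction — if some complete matching is Pareto-optimal, extract a cover by reading off which set-agents were ``activated'' to absorb item-agents, using the capacity constraints to force the selection to have exactly the right size. The membership side is trivial here because we are proving NP-hardness of an existence question, not a verification question.

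The main obstacle I anticipate is the tension between \emph{completeness} and \emph{Pareto-optimality} pulling in opposite directions: completeness forces every agent to be saturated, which tends to create exactly the kind of improving coalition that destroys Pareto-optimality (as Example~3 shows), so the gadget must be delicately balanced so that saturation is achievable \emph{without} leaving an improving move precisely when a cover exists. In particular, the hard part will be ensuring that a partial improvement (some but not all item-agents rerouted to top choices) is never itself a valid Pareto-improvement that would spuriously certify non-Pareto-optimality; this requires that lexicographic preferences make any ``half-rerouting'' strictly worse for some involved agent, so that improvements come only in all-or-nothing bundles aligned with whole sets $Y_j$. Getting the preference lists and the capacity of $4$ to enforce this bundling, while still admitting a genuinely Pareto-optimal complete matching exactly when the cover exists, is where the construction's correctness proof will concentrate its effort.
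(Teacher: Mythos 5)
Your proposal is a plan rather than a proof: every step that carries the actual mathematical content --- the gadget definitions, the preference lists, the argument that completeness pins down the matching --- is deferred (``I would set up \dots so that \dots''). The one concrete commitment you do make is the polarity of the reduction, and it is the opposite of the paper's. Your steps (3) and (4) aim for ``a cover exists iff a complete Pareto-optimal matching exists,'' whereas the paper proves ``a cover exists iff \emph{no} complete Pareto-optimal matching exists'': in its construction a complete matching $M$ always exists and is essentially unique, and a 3-cover is precisely a certificate that $M$ is Pareto-dominated, so yes-instances of \textsc{Exact-3-Cover} map to no-instances of the existence question. Your polarity is substantially harder to engineer. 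Since saturating every agent is a polynomial-time degree-constrained subgraph question, you cannot hide the cover inside the feasibility of completeness; you would need complete matchings to exist in both cases, with \emph{every} one of them dominated when no cover exists and at least one undominated when a cover exists --- that is, selecting the cover must \emph{eliminate} improving coalitions. But in the Example-3-style mechanism you yourself describe (rerouting hub agents to their top choices), covers \emph{create} the dominating rerouting rather than kill it, and nothing in your sketch explains how choosing $n$ sets would suppress it. (A minor side point: for Pareto-optimality only the grand coalition $N$ matters, so ``no coalition (in particular not $N$) can weakly block'' conflates the strong core with Pareto-optimality.)

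The concrete device you are missing is the paper's edge-replacement trick. It reuses the instance of Theorem \ref{pareto1} verbatim and replaces each edge $a_ib_i$ and $a_ib_{i-1}$ by a copy of Example 3 together with a connector agent $g_i$ (resp.\ $h_i$) with preference list $b_i > x_7^i > x_8^i > a_i$ (resp.\ $a_i > y_7^i > y_8^i > b_{i-1}$). If a complete matching routes through a connector (both $a_ig_i$ and $g_ib_i$ in $M$), completeness forces the gadget interior into its unique complete matching, which Example 3 shows is Pareto-dominated; a saturation-parity argument on $g_i$, $h_i$ then shows that any complete Pareto-optimal candidate must avoid the connectors entirely, so the complete matching coincides with the fixed matching $M$ of Theorem \ref{pareto1}. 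The existence question thereby collapses to ``is $M$ Pareto-optimal,'' which the earlier analysis answers exactly by the presence or absence of a 3-cover (with one extra check that an improvement starting inside a gadget propagates to all of $A\cup B\cup C\cup D$). Without a forcing mechanism of this sort, your step (2) (``completeness pins down most of the matching'') and your all-or-nothing bundling requirement remain unsubstantiated assertions; the proposal correctly identifies the obstacles but does not overcome them.
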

\begin{proof}
Again we reduce from the problem \textsc{exact-3-cover}. We use almost the same construction as in theorem \ref{pareto1} with the only difference that here we substitute each $a_ib_i$ and $a_ib_{i-1}$ edge with a gadget $G_i$ and $H_i$ respectively. Every gadget $G_i$ and $H_i$ are essentially just a copy of Example 3 in Section 2, illustrated in Figure \ref{pareto}, only we add a special agent $g_i$ and $h_i$ respectively. We will denote the agents corresponding to $x_1,...,x_{10}$ in $G_i$ by $x_1^i,...,x_{10}^i$ and in $H_i$ by $y_1^i,...,y_{10}^i$. An agent $g_i$ has preference $b_i>x_7^i>x_8^i>a_i$ and is added to the end of the preference lists of both $x_7^i$ and $x_8^i$. An agent $h_i$ has preference $a_i>y_7^i>y_8^i>b_{i-1}$ and is added to the end of the preference lists of both $y_7^i$ and $y_8^i$. Finally, we substitute $b_i$ and $b_{i-1}$ in the preference list of $a_i$ by $g_i$ and $h_i$ respectively and similarly substitute $a_i$ and $a_{i+1}$ in $b_i$'s preference list by $g_i$ and $h_{i+1}$ respectively for each $i=1,...,3n.$

Suppose that there is a complete Pareto-optimal matching $M$ in this instance and suppose that there is an index $i$ such that $a_ig_i$ and $g_ib_i$ are in $M$. Then the matching $M$ restricted to the se of vertices $\{x_1^i,...,x_{10}^i\}$ has to be $\{ x_1^ix_3^i, x_1^ix_4^i, x_2^ix_5^i, x_2^ix_6^i, x_7^ix_8^i, x_9^ix_{10}^i \}$ by the completeness of $M$, but then if we give each agent apart from $x_1^i,..,x_{10}^i$ the same partners and match $x_1^i,...x_{10}^i$ such that each gets only their favourite, then we obtain a matching $M'$ that Pareto-dominates $M$, a contradiction. 

Similarly, if $a_ih_i$ and $h_ib_{i-1}$ are in $M$ for some $i$, then $M$ cannot be Pareto-optimal either, a contradiction.

Since a gadget $G_i$ or $H_i$ can be saturated only if $g_i$ or $h_i$ is matched to 0 or 2 agents in them, we obtain that there can be no edge in $M$ that connects an agent $a_i$ or $b_i$ to an agent $g_j$ or $h_j$. But $M$ is a complete matching, hence each agent in $A\cup B$ is saturated, so all edges between $A$ and $Q\cup D$ has to be included and also all edges between $B$ and $C\cup P$. Then, since every agent in $T\cup S$ is saturated too, all edges between $U$ and $T$ and all edges between $S$ and $V$ are also included $M$, so $M$ is basically the same matching that we constructed in Theorem \ref{pareto1}. 

Now if there would be an exact 3-cover in $I$, then we could construct a matching $M'$ that Pareto-dominates $M$, implicating that there can be no complete Pareto-optimal matching the same way as before, with the addition that the agents in $A\cup B$ obtain their partners in $\cup_ig_i \cup \cup_ih_i$ instead of each other. This way each agent $g_j$ and $h_j$ obtains their best partner, so they are strictly better off, too. Finally, we also match each agent in the remaining parts of  $\cup_iG_i \cup \cup_iH_i$ to their best choices. So the existence of an exact 3-cover implicates that no complete Pareto-optimal matching exists. 

In the other direction if there is no complete Pareto-optimal matching, then the matching $M$ constructed above is not Pareto-optimal, so it is dominated by a matching $M'$. Again, the same proof works to show that there has to be a 3-cover of the original instance. The only additional thing we have to check in this case is that, if any agent from a gadget $G_i$ or $H_i$ improves their position, then so does every agent in $A\cup B\cup C\cup D$. But this is only possible if it gets its first choice in $M'$, which implicates that every agent in the gadget obtains its best partner, so $g_i$ or $h_i$ improves their position too, which leads to every edge between $A\cup B$ and $\cup_ig_i\cup \cup_ih_i$ included in $M'$, so the proof is completed.
\end{proof}

\subsection{Checking the strong core property of a matching}

\begin{Th}
Deciding whether a given matching is in the strong core of a many-to-many stable matching problem under lexicographic preferences is co-NP-complete.
\end{Th}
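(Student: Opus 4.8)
The plan is to establish co-NP membership directly and then prove co-NP-hardness by reducing from the problem of checking Pareto-optimality, which was shown co-NP-complete in Theorem~\ref{pareto1} (and which ultimately rests on \textsc{Exact-3-Cover}). Membership is immediate: a weakly blocking coalition $(S,M')$ is a polynomial-size certificate, and verifying that $M'$ is a matching on $S$ with $M'(a)\succeq_a M(a)$ for all $a\in S$ and $M'(a)\succ_a M(a)$ for at least one $a\in S$ takes polynomial time. Hence ``$M$ is \emph{not} in the strong core'' lies in NP, so the decision problem is in co-NP.

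For hardness I would exploit the fact that the strong core is always contained in the set of Pareto-optimal matchings, since Pareto-optimality is exactly the special case $S=N$ of weak blocking. Thus ``in the strong core $\Rightarrow$ Pareto-optimal'' holds for free, and it suffices to build, from an \textsc{Exact-3-Cover} instance, a matching $M^\ast$ for which the converse also holds, i.e.\ $M^\ast$ Pareto-optimal $\Rightarrow$ $M^\ast$ in the strong core. Combined with Theorem~\ref{pareto1} this yields the chain $M^\ast$ in the strong core $\iff$ $M^\ast$ Pareto-optimal $\iff$ there is no exact $3$-cover, and co-NP-completeness follows.

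The construction of Theorem~\ref{pareto1} cannot be reused verbatim, and identifying why is the key. There the sets $A$ and $B$ can rematch internally along the cycle $a_1\!-\!b_1\!-\!a_2\!-\!\cdots\!-\!b_{3n}\!-\!a_1$ so that every agent obtains a top partner; this is a weakly (in fact strictly) blocking coalition supported on $A\cup B$ alone, \emph{independently} of whether an exact $3$-cover exists, so that $M$ there is never in the strong core. The reason is structural: a coalition may costlessly abandon agents outside it (the discarded partners in $P\cup Q$ are simply ignored), whereas Pareto-domination forces every abandoned partner to be re-absorbed. The fix I would pursue is to redesign the preferences so that \emph{no proper subset} of agents admits an internally improving rearrangement, equivalently so that the only beneficial trading cycle is the global one encoding a full cover. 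Concretely, I would route each agent's sole attainable improvement through the cover gadgets $S\cup T$ and $U\cup V$, so that acquiring a strictly better partner forces some $s_j$ or $t_j$ to release capacity; by the same propagation argument as in the second half of the proof of Theorem~\ref{pareto1}, this then forces a complete exact $3$-cover and the simultaneous improvement of all agents in $A\cup B\cup C\cup D$. Any weakly blocking coalition would therefore have to contain an entire cover and could be completed to a matching on $N$ that Pareto-dominates $M^\ast$.

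The main obstacle is precisely this last implication: verifying that the rerouted instance admits \emph{no} short or partial improving coalition, so that strong-core blocking genuinely collapses onto global Pareto-domination. This is the delicate combinatorial heart of the argument and requires a case analysis over which gadget-components a putative coalition can touch, showing in each case that a single strictly improving agent triggers the full cover cascade and that every freed partner is re-absorbed within the coalition. By contrast, the forward direction—an exact $3$-cover produces an explicit weakly blocking coalition, indeed a Pareto-dominating matching—is routine and essentially identical to the corresponding step of Theorem~\ref{pareto1}.
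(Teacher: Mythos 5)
Your membership argument is fine, and your overall strategy---reduce from the Pareto-optimality problem of Theorem~\ref{pareto1} by arranging that strong-core blocking collapses onto global Pareto-domination---is exactly the right one. You also correctly diagnose why the construction of Theorem~\ref{pareto1} cannot be reused verbatim: the coalition $A\cup B$ can rematch along its internal cycle, discard the partners in $P\cup Q$ and $C\cup D$ at no cost, and thereby block regardless of whether a cover exists. But the proof has a genuine gap exactly where you flag it: the ``rerouted'' instance in which no proper coalition admits an internal improvement is never actually constructed. You state the desired property (``route each agent's sole attainable improvement through the cover gadgets'') and then defer its verification as ``the delicate combinatorial heart of the argument.'' Without concrete preference lists and a completed case analysis showing that every strictly improving agent triggers the full cascade and every abandoned partner is re-absorbed, there is no reduction; the hardness claim rests on an unproven assertion that such a redesign exists and works. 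It is not at all obvious that it does---any local gadget you insert risks creating precisely the kind of short trading cycle you are trying to exclude.

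The idea you are missing is much lighter than a gadget redesign, and it makes the reduction generic rather than tied to \textsc{Exact-3-Cover}. Given \emph{any} instance $I$ with a matching $M$ to be tested for Pareto-optimality, add two new agents $a^*$ and $b^*$, one per side, with unbounded capacity, acceptable to everyone, and---crucially---placed at the \emph{top} of every original agent's preference list; increase each original capacity by one, and extend $M$ to $M'$ by adding $a^*b^*$ and matching every original agent to the hub on the opposite side. Under lexicographic preferences, no member of a weakly blocking coalition can abandon her top partner $a^*$ or $b^*$ (she would become strictly worse off), so the hubs are forced into any coalition; but each hub is already matched to all of its acceptable partners, so it can only remain weakly as well off by keeping every one of them, which drags all agents into the coalition and pins down all edges of $M'\setminus M$. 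Hence weakly blocking coalitions for $M'$ are exactly Pareto-improvements of $M$ in $I$, giving ``$M$ Pareto-optimal in $I$ iff $M'$ in the strong core of $I'$'' with no case analysis over partial coalitions whatsoever. Your plan, if completed, would presumably also work, but it demands a delicate re-verification of the entire Theorem~\ref{pareto1} analysis, whereas the hub construction settles the converse implication for free.
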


\begin{proof}
The problem is in co-NP, since checking that $M$ can be blocked by a coalition $S$ with an alternative matching $M_S$ can be done efficiently. We reduce from the problem of checking Pareto-optimality for many-to-many stable matching problem under lexicographic preferences, that we showed to be co-NP-complete in Theorem \ref{pareto1}. Suppose that we have such an instance $I$, thus a many-to-many market with linear orders of the agents, and a matching $M$ that is to be checked to be Pareto-optimal. We add two new agents $a^*$ and $b^*$ to the market, one to each side, such that they have unbounded capacities and they find everyone acceptable and most importantly everyone in $I$ put either $a^*$ or $b^*$ as her top choice in $I'$. Let us also increase the capacity of all the agents in $I$ by one and then we create $M'$ as an extension of $M$ in the following way, we add $a^*b^*$ and we match all the agents in $I$ to either $a^*$ or $b^*$ (according to the side they belong to). Now, it is easy to see that $M$ is Pareto-optimal in $I$ if and only if $M'$ is in the strong core in $I'$, since a blocking coalition in $I'$ must involve every agent in $I'$ and the alternative matching of the blocking coalition must keep all the pairs in $M'\setminus M$.
\end{proof}

\section{Tractable cases}\label{sec:easiness}

In this section we show that although many natural problems related to strong core and Pareto-optimality are NP-hard, we can still find some reasonable solutions efficiently.

\subsection{Near feasible and fractional matchings with TTC}

We give two algorithms, that are heavily inspired by the Top Trading Cycle (TTC) algorithm of Gale \cite{ShapleyScarf1974}. One computes a matching $M$, such that $M$ violates the original capacity constraints by at most one, but is guaranteed to be a strong core solution for this slightly modified instance. We call such a solution a \textit{near-feasible strong core solution}.
The other computes a half-matching $M$, that is guaranteed to be in the strong core of fractional solutions of the original instance. 

We define fractional matchings as $M:E\to [0,1]$ functions, such that $\sum_{u:uv\in E}M(uv)\le k(v)$ for every $v\in V$. Now, if $S$ and $T$ are fractional matchings, then $S\succ_a T$ if and only if the fractional characteristic vectors (so $\chi_S$ has $M(av)$ at its $v$ coordinate) satisfy $\chi_S>\chi_T$ lexicographically.

The algorithms described here not only work for the many-to-many stable matching case, but also for the non-bipartite stable fixtures problem. Moreover, both algorithms run in linear time in the number of edges.

The main idea of the algorithms is very simple: in each step, we create a directed graph $D_i=(V_i,A_i)$, such that the vertices of $D_i$ are the agents who have remaining capacities at the $i$-th iteration, and there is a directed edge from $a$ to $b$ if $b$ is $a$'s best choice from the vertices of $D_i$ who are not yet matched to $a$.  Then we search for a directed cycle $C_i$ in $D_i$ and add the edges of $C_i$ to the matching. 

Now we describe the algorithms formally. Let $p_U^M(v)$ denote the best agent in $v$'s preference list among the agents in $U$ who are not matched to $v$ in $M$. Let $k(v)$ denote the capacity of $v$. Also we use $E(C_i)$ as the edges corresponding the the directed edges of $A(C_i) $ in the original graph $G$.
When we use the notation $\frac{1}{2}e$, we mean that we only add the edge $e$ with $\frac{1}{2}$ weight to $M$.

\begin{algorithm}\caption{Near-feasible strong core}
\begin{algorithmic}
\State Set $M=\emptyset$
\State $V_0=N$, $A_0=\{ vp_V^M(v) :\; v\in V_0\}$,
\While{$A_i\ne \emptyset$}
\State Find a directed cycle $C_i$ in $D_i$. 
\State For each $e\in E(C_i)$: $M:=M\cup e$.
\If{$|C_i|=2$}
\State For each $v\in V(C_i):$ $k(v)=k(v)-1$
\Else
\State For each $v\in V(C_i):$ $k(v)=k(v)-2$
\EndIf
\State $V_{i+1}=\{ v\in V:\; k(v)\ge 1\}$
\State $A_{i+1}=\{ vp_{V_{i+1}}^M(v):\; v\in V_{i+1}\}$
\EndWhile
\end{algorithmic}
\end{algorithm}

\begin{algorithm}\caption{Half-integer strong core}
\begin{algorithmic}
\State Set $M=\emptyset$
\State $V_0=N$, $A_0=\{ vp_V^M(v) :\; v\in N\}$,
\While{$A_i\ne \emptyset$}
\State Find a directed cycle $C_i$ in $D_i$. 
\If{$|C_i|=2$}
\State Let $e= E(C_i)$: $M:=M\cup e$. 
\State For each $v\in V(C_i):$ $k(v)=k(v)-1$
\Else
\If{$\exists v\in V(C_i): k(v)=1$}
\State For each $e\in E(C_i)$: $M:=M\cup \frac{1}{2}e$.
\State For each $v\in V(C_i):$ $k(v)=k(v)-1$
\Else
\State For each $e\in E(C_i)$: $M:=M\cup e$.
\State For each $v\in V(C_i):$ $k(v)=k(v)-2$
\EndIf
\EndIf
\State $V_{i+1}=\{ v\in V:\; k(v)\ge 1\}$
\State $A_{i+1}=\{ vp^M_{V_{i+1}}(v):\; v\in V_{i+1}\}$
\EndWhile
\end{algorithmic}
\end{algorithm}

\begin{Th}
\label{nearf}
Algorithm 1 produces a matching $M$ in $\mathcal{O}(|E|)$ time for the stable fixtures problem that is in the strong core of the instance with modified capacities, where $k(v)=\max \{ k(v),|M(v)|\} \le k(v)+ 1$.
\end{Th}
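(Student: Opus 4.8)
The plan is to verify three things: well-definedness together with the $O(|E|)$ running time, near-feasibility of the output, and membership in the strong core of the capacity-modified instance. The first two are the routine parts. I would first record the basic invariants: capacities only decrease, so $V_0\supseteq V_1\supseteq\cdots$, a removed vertex never returns, and every edge inserted into $M$ is new, since $p^M$ never returns a partner already matched to the current vertex. I would then show the loop is well-defined, i.e. whenever $A_i\neq\emptyset$ a directed cycle exists: the set of vertices having an outgoing arc is closed under the ``best available'' map (if $v\to p(v)$ then $v$ is itself an available partner for $p(v)$, so $p(v)$ has an outgoing arc), hence following arcs from any active vertex stays among active vertices and must close a cycle. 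Near-feasibility then follows from the decrement bookkeeping: a vertex in a $2$-cycle gains one partner and loses one unit, a vertex in a longer cycle gains two partners and loses two units, so the number of partners equals the total decrement; a vertex is dropped exactly when its remaining capacity falls below $1$, and the last step can overshoot by at most one unit (a long cycle applied at remaining capacity $1$), giving $|M(v)|\le k(v)+1$ and hence $k^*(v):=\max\{k(v),|M(v)|\}\le k(v)+1$. For the running time I would use the standard TTC bookkeeping: each agent keeps a pointer advancing monotonically through its preference list as partners become matched or removed, cycles are detected by walking along arcs, and each arc traversal or pointer advance is charged to a distinct edge, for $O(|E|)$ total.

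The substantive part is the strong core claim. Write $k^*$ for the modified capacities and note $M$ is feasible for $k^*$. Suppose toward a contradiction that some coalition $S$ with alternative matching $M_S$ (feasible for $k^*$, using only edges inside $S$) weakly blocks $M$. The central claim I would prove is: for every $v\in S$, $M(v)\subseteq M_S(v)$, so every $M$-edge at $v$ lies in $M_S$. I would prove this by induction on the iteration index, letting $M^{(i)}$ denote the partial matching after iteration $i$, with induction hypothesis ``$M^{(i-1)}(v)\subseteq M_S(v)$ for all $v\in S$''. The key structural fact is greedy optimality: when $u$ receives partner $u'=p^{M^{(i-1)}}_{V_i}(u)$ in cycle $C_i$, the partner $u'$ is $u$'s best still-available partner, so every agent $w$ with $w>_u u'$ is unavailable, meaning either $w$ is already matched to $u$ (hence $w\in M^{(i-1)}(u)\subseteq M_S(u)$ by the hypothesis) or $w\notin V_i$.

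The main obstacle, and the heart of the argument, is ruling out the second possibility: that in $M_S$ the agent $u\in S$ picks up a strictly preferred partner $w$ with $w\notin V_i$, which would let $u$ improve lexicographically above coordinate $u'$. Here I would use that leaving $V_i$ can only happen through capacity exhaustion, so $w\notin V_i$ forces $|M(w)|\ge k(w)$, i.e. $w$ is full with respect to $k^*$, and moreover all of $w$'s partners were acquired before iteration $i$, so $M(w)=M^{(i-1)}(w)$. If $w\notin S$ then $uw\notin M_S$ trivially; if $w\in S$, then the hypothesis gives $M(w)\subseteq M_S(w)$, and together with $|M_S(w)|\le k^*(w)=|M(w)|$ this forces $M_S(w)=M(w)$, so $w$ cannot take the new partner $u$. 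Either way $uw\notin M_S$. Consequently $M_S(u)$ and $M(u)$ agree on every coordinate above $u'$; since $M(u)\ni u'$, weak preference forces $u'\in M_S(u)$, i.e. $uu'\in M_S$ and $u'\in S$. Propagating this around the directed cycle $C_i$ shows that as soon as one of its vertices lies in $S$ the whole cycle lies in $S$ and all of its edges lie in $M_S$; this gives $M^{(i)}(v)\subseteq M_S(v)$ for every $v\in S$ and closes the induction.

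Finally I would convert the claim into the contradiction. For $v\in S$, $M(v)\subseteq M_S(v)$ implies $M_S(v)\succeq_v M(v)$, with equality unless $v$ gains an extra partner; so if $S$ weakly blocks, some $v\in S$ must gain a partner $w^*\in S$ with $vw^*\in M_S\setminus M$. If $v$ is full ($|M(v)|=k^*(v)$) this already violates $|M_S(v)|\le k^*(v)$. Otherwise $v$ has spare capacity and hence survives to termination; since the algorithm terminates with no outgoing arcs, its acceptable non-partner $w^*$ must have been removed, so $w^*$ is full, and then $M(w^*)\subseteq M_S(w^*)$ together with the extra partner $v$ violates $|M_S(w^*)|\le k^*(w^*)$. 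Both cases contradict feasibility of $M_S$, so no weakly blocking coalition exists and $M$ is in the strong core of the $k^*$-instance. The one genuinely delicate step is the unavailability argument above; everything else is bookkeeping with the greedy invariant.
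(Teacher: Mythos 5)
Your proof is correct and takes essentially the same route as the paper's: both hinge on the observation that any partner an agent prefers to its TTC cycle-partner is either already matched to it or was removed and hence saturated under the modified capacities, so a weakly blocking coalition must retain every cycle edge, propagating around each cycle. Your forward induction with the invariant $M^{(i)}(v)\subseteq M_S(v)$ is just a reorganization of the paper's ``first bad cycle'' minimal-counterexample argument, with the added merit that your explicit termination/feasibility endgame justifies the step the paper leaves terse, namely why such a first cycle must exist at all.
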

\begin{proof}
In each iteration we add at least one edge to $M$, so the algorithm terminates in at most $|E|$ iterations. 

Also, we only add at most two edges containing a given vertex $v$ in one step and only to vertices with $k(v)\ge 1$, so $|M(v)|\le k(v)+1$.

Finally we show that $M$ is in the strong core of this new instance. First of all it is easy to see, that if we run the algorithm with these new capacities we get the same output $M$, so we can suppose that the algorithm never violates the capacity constraints during its execution.

Assume that there is a blocking coalition $U$ for $M$ and let $M(U)$ be the matching for the vertices in $U$ that blocks $M$. Let $C_i$ be the first cycle that contains an edge that is not in $M(U)$, but contains a vertex of $U$ and let that vertex be $u$. Since $M(U)\not\subset M$, such a cycle exists. Then, by the fact that $u$ must have an at least as good partner set it $M(U)$ we get the edge corresponding to the arc $uw$ starting from $u$ in $C_i$ is in $M(U)$. ($u$ cannot get a better partner than $w=p^{M_i}_{V_i}(u) $ that she does not already had in $M$, because then there would be a cycle $C_j$ before $C_i$ that contains a vertex in $U$ but not every edge of $E(C_j)$ is in $M(U)$, a contradiction). This also means that $w\in U$, and by similar reasoning, the edge corresponding to the arc starting from $w$ is in $M(U)$, too, and continuing this argument we get that $E(C_i)\subset M(U)$, a contradiction.  
\end{proof}
\begin{Th}
For the stable fixtures problem Algorithm 2 produces a half-matching $M$ in $\mathcal{O}(|E|)$ time that is in the strong core of fractional matchings. 
\end{Th}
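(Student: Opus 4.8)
The plan is to mirror the proof of Theorem~\ref{nearf}, carrying its three claims---half-integrality, feasibility together with the linear running time, and membership in the strong core---into the fractional setting produced by Algorithm~2.

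I would first dispatch the structural part. Because $p^{M}_{V_i}(v)$ never returns a partner already matched to $v$, no edge is touched in two different cycles, so every edge ends up with weight $0$, $\tfrac12$ (from a half odd cycle) or $1$ (from a $2$-cycle or a full odd cycle); hence $M$ is half-integral. A one-line invariant shows that in each iteration the weight accumulated at a vertex $v$ equals the amount subtracted from $k(v)$ (namely $1$ for $2$-cycles and half odd cycles, $2$ for full odd cycles), so $\sum_{e\ni v}M(e)\le k(v)$ holds throughout and $M$ is a feasible fractional matching. The $\mathcal{O}(|E|)$ bound is the standard TTC amortisation: maintain a forward-only pointer into each preference list (total movement $\mathcal{O}(\sum_v\deg v)=\mathcal{O}(|E|)$) and locate each cycle by chasing the unique out-arc of $D_i$, charging the traversal to the edges it finalises.

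For the strong-core claim I would argue by contradiction, following the skeleton of Theorem~\ref{nearf}. Suppose a coalition $U$ with a fractional matching $M_U$ supported on edges inside $U$ weakly blocks $M$, so $M_U(v)\succeq_v M(v)$ for all $v\in U$ and strictly for at least one. Call an edge a \emph{disagreement} if $M$ and $M_U$ put different weight on it, and let $C_i$ be the first algorithm-cycle carrying a disagreement incident to a member of $U$. Fix such a member $u\in U\cap V(C_i)$ together with its out-arc $u\to w$, where $w=p^{M}_{V_i}(u)$. The first goal is to show $M_U$ is not \emph{deficient} on the out-arc, i.e.\ $M_U(uw)\ge M(uw)$. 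For every partner $p>_u w$ one has $M_U(up)=M(up)$: if $u$ received $p$ earlier, the relevant edge lies in an earlier cycle through $u\in U$, which by minimality carries no disagreement; and if $M(up)=0$, so $p$ was exhausted before iteration $i$, then $M_U(up)=0$ as well, for otherwise $p\in U$ is saturated in $M$, all cycles through $p$ precede $C_i$ and hence agree with $M_U$, forcing $\sum_q M_U(pq)\ge k(p)+M_U(pu)>k(p)$ and violating feasibility of $M_U$. Since $M_U$ weakly dominates $M$ at $u$ and matches it exactly on all partners above $w$, lexicographic domination yields $M_U(uw)\ge M(uw)>0$, so $w\in U$ and the propagation can continue along $w$'s out-arc.

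The step I expect to be the main obstacle, and the point at which the argument genuinely departs from the integral Theorem~\ref{nearf}, is ruling out a \emph{surplus} on the out-arc, namely $M_U(uw)>M(uw)$. When $M(uw)=\tfrac12$ this leaves room for $M_U(uw)\in(\tfrac12,1]$, so the coalition might try to \emph{complete} the half-edges of an odd cycle; showing this cannot happen is exactly what forces the whole cycle to agree and closes the contradiction. I anticipate that the heart of the proof is a case analysis on the three possible values of $M(uw)$ in which the tight capacity of the bottleneck vertex of a half odd cycle---the capacity-$1$ agent whose presence triggered the half-weights---together with the feasibility of $M_U$ at \emph{every} vertex of the cycle simultaneously, is what prevents the members from improving while completing the half-edges. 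Tying the lexicographic comparisons to the feasibility constraints around an entire odd cycle at once, rather than edge by edge as in the integral case, is the delicate part, and I would expect to spend most of the effort verifying that this balance holds in every configuration.
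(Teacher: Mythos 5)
Your structural, feasibility and running-time arguments are fine, and your propagation lemma --- exact agreement of $M_U$ with $M$ on all partners that $u$ prefers to $w$, via the minimality of $C_i$ and the neat saturation/feasibility argument for exhausted partners, yielding $M_U(uw)\ge M(uw)>0$ and $w\in U$ --- is correct and in fact spelled out more carefully than in the paper. But the proof is not complete: the step you yourself identify as the heart, ruling out a surplus $M_U(uw)>M(uw)$ on the minimal cycle, is never carried out, and the route you sketch would not close it. Feasibility at the bottleneck vertex $z$ of a half odd cycle only pins down the two cycle edges \emph{incident to} $z$: since the earlier cycles through $z\in U$ agree exactly with $M_U$ and $z$ ends saturated, $M_U$ cannot exceed $\tfrac12$ on those two edges --- but this says nothing about half-edges of the cycle away from $z$. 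Concretely, take a triangle with arcs $z\to a\to b\to z$, capacities $k(z)=1$, $k(a)=k(b)=2$, preferences $z:a>b$, $a:b>z$, $b:z>a$. The half-triangle output together with the surplus $M_U(ab)=1$ (keeping $M_U(za)=M_U(bz)=\tfrac12$) is feasible and weakly preferred by all three agents: $z$ keeps exactly her partners, while $a$ and $b$ strictly gain. So no amount of lexicographic-plus-capacity bookkeeping \emph{around the cycle $C_i$ itself} can exclude the surplus; what excludes it is the algorithm's behaviour \emph{after} $C_i$: since $a$ and $b$ retain capacity after the half-cycle, they meet again in a later $2$-cycle and complete the edge $ab$ (which requires reading ``not matched to $v$ in $M$'' in the definition of $p_U^M$ as ``not fully matched''), so that in the final output a half-edge never has exploitable slack at both endpoints. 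Your argument never inspects cycles after $C_i$, hence cannot see this, and the announced ``case analysis on the three possible values of $M(uw)$'' would fail exactly on this configuration.

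The paper sidesteps your obstacle by a different minimality criterion: it takes the first cycle containing an agent of $U$ \emph{and a deficient edge}, i.e.\ an edge whose weight in $M(U)$ is strictly smaller than in $M$. With that choice, the propagation you already performed --- every edge of $C_i$ retains at least its $M$-weight in $M_U$ --- immediately contradicts the presence of the deficient edge, so the surplus case never has to be analysed on $C_i$ at all. (This is not free: the paper must then assert that a deficient cycle exists, and its propagation still tacitly uses exact agreement, not just no-deficiency, on earlier cycles --- your ``first disagreement of any kind'' minimality supplies that exactness, the paper's deficiency-only minimality strictly speaking does not; the paper is terse on both points.) To repair your write-up, you would need to combine the two: keep your stronger minimality for the propagation, but derive the final contradiction from a deficiency rather than from an arbitrary disagreement, which requires first proving the structural fact above --- that a weakly blocking coalition cannot consist of surpluses alone, because every surplus either violates feasibility at a saturated endpoint or sits on a half-edge that the algorithm would have completed.
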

\begin{proof}
The running time is $\mathcal{O}(|E|)$, because in each iteration we add at least one edge fully to $M$ or two half edges.

The capacity constraints are obviously satisfied during the algorithm.

To show that the half-matching $M$ produced by the algorithm is in the strong fractional core, take the first cycle $C_i$ that has an agent from a blocking coalition $U$, but there is an edge of $C_i$ that has less weight in $M(U)$ than in $M$. Here $M(U)$ is the fractional matching that the agents of $U$ obtain among themselves to get a better solution.
Again, since $M$ restricted to $U$ cannot be $M(U)$, such a cycle exists. Let $u\in C_i\cap U$. The situation of $u$ has improved in the fractional matching $M(U)$, which means that the arc $uw$, $w=p_U^{M_i}(u)$ leaving $u$ must be in $M(U)$ with as much weight as possible (by the choice of $C_i$) and so are every edge of $C_i$. But this yields that all edges of $E(C_i)$ have the same weight in $M$ and $M(U)$, a contradiction.
\end{proof}

\subsection{Maximum size Pareto-optimal matchings}

Finally, we give an efficient algorithm for computing a maximum cardinality Pareto-optimal matching for the many-to-many stable matching problem. The techniques we use here are very similar to the ones in \cite{cechlarova2014pareto}, where they investigated Pareto-optimal matchings in the case when only one side of the agents have preferences.

Now we state our algorithm. We will call the two types of agents as men and women respectively. Denote the set of men as $U=\{ u_1,...,u_n\}$ and the set of women $W=\{ w_1,...,w_m\}$.

\begin{algorithm}\caption{Maximum size Pareto-optimal matching}
\begin{algorithmic}
\State Let $M:=\emptyset$
\For{$i=1,..,n$}
\State $l=1$
\While{$|M(u_i)|<k(u_i)$ and $l\le deg(u_i)$} 
\State Let $w_j$ be the $l$-th choice of $u_i$
\If{There is a maximum size matching containing $M\cup u_iw_j$}
\State $M:=M\cup u_iw_j$
\EndIf
\State $l=l+1$
\EndWhile
\EndFor
\end{algorithmic}
\end{algorithm}

\begin{Th}
For the many-to-many stable matching problem Algorithm 3 finds a maximum size matching that is Pareto-optimal in polynomial time.
\end{Th}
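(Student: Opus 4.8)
The plan is to verify three things about the output $M$ of Algorithm 3: that it is computed in polynomial time, that it has maximum cardinality, and that it is Pareto-optimal. The running time is immediate: the outer loop runs $n$ times and for a fixed $u_i$ the inner loop runs at most $\deg(u_i)$ times, so the test ``there is a maximum size matching containing $M\cup u_iw_j$'' is executed at most $\sum_i\deg(u_i)=|E|$ times; each such test is a single maximum $b$-matching (equivalently maximum flow) computation on a polynomial-size network, so the whole algorithm is polynomial. For maximality I would maintain the invariant that after every step the current $M$ extends to a globally maximum matching; this holds initially for $M=\emptyset$ and is preserved because an edge is added only when the extendability test succeeds. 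To see that the final $M$ is itself maximum, suppose not and pick a maximum matching $M^{*}\supseteq M$ and an edge $u_iw\in M^{*}\setminus M$. Since $M(u_i)$ never changes after $u_i$ is processed, if $u_i$ were saturated we would get $M(u_i)=M^{*}(u_i)$ (equal size, one contained in the other), contradicting $u_iw\in M^{*}(u_i)\setminus M(u_i)$; hence $u_i$ is unsaturated, so the inner loop examined $w$, and at that moment the current matching $M_\ell$ satisfied $M_\ell\cup u_iw\subseteq M^{*}$ with $M^{*}$ maximum, so the test succeeded and $u_iw$ would have been added, a contradiction.

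For Pareto-optimality I would first reduce the two-sided notion to a one-sided one. Claim: if $M'$ Pareto-dominates $M$ (all agents weakly better, someone strictly), then already some \emph{man} is strictly better while all men remain weakly better. Indeed, if no man were strictly better then, since lexicographic preferences are strict, every man would satisfy $M'(u)=M(u)$; but the matching is the union of the men's incident edges, so this forces $M'=M$, contradicting that some agent strictly improves. Thus it suffices to prove that $M$ is Pareto-optimal with respect to the men only, which is what I would do; the women's preferences only make domination harder.

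So assume for contradiction a matching $M'$ with $M'(u)\succeq_u M(u)$ for all men $u$ and $M'(u_k)\succ_{u_k}M(u_k)$ for some $k$; take $k$ minimal, so $M'(u_i)=M(u_i)$ for $i<k$. Let $w$ be the most preferred woman of $u_k$ on which $M(u_k)$ and $M'(u_k)$ differ; lexicographic strict improvement gives $w\in M'(u_k)\setminus M(u_k)$ together with agreement of $M(u_k),M'(u_k)$ above $w$. Writing $M_\ell$ for the matching at the instant the algorithm tested $w$ for $u_k$, one checks $M_\ell(u_k)=\{w':\ w'\succ_{u_k}w,\ w'\in M(u_k)\}\subseteq M'(u_k)$ and $M_\ell(u_i)=M'(u_i)$ for $i<k$, whence $M_\ell\cup u_kw\subseteq M'$. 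The algorithm did not add $u_kw$ for one of two reasons. If $u_k$ was already saturated at that moment, then $|M'(u_k)|\ge|M_\ell(u_k)|+1=k(u_k)+1$, violating feasibility of $M'$, a contradiction. Otherwise the extendability test failed, i.e.\ no maximum matching contains $M_\ell\cup u_kw$. If $|M'|=|M|$ then $M'$ is itself a maximum matching containing $M_\ell\cup u_kw$, an immediate contradiction; so the only remaining case is $|M'|<|M|$.

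The case $|M'|<|M|$ is the main obstacle, because then $M'$ need not be maximum, so it cannot serve as the witnessing maximum matching, and the lexicographic preferences allow a man to be weakly better while holding \emph{fewer} partners (trading quantity for quality). Here I would argue using maximum-matching theory: since $M_\ell\cup u_kw$ is not extendable to a maximum matching while $M\supseteq M_\ell$ is maximum and $M'\supseteq M_\ell\cup u_kw$, an alternating-path analysis of $M\bigtriangleup M'$ (note $M_\ell$ is disjoint from this symmetric difference, as $M_\ell\subseteq M\cap M'$) must exhibit a component on which $M$ uses strictly more edges than $M'$ and which is ``blocked'' precisely because $u_kw$ cannot be absorbed without loss; tracing this deficiency should pin down a man who in $M'$ loses a partner with no compensating better one, i.e.\ a man $u_j$ with $M(u_j)\succ_{u_j}M'(u_j)$, contradicting that all men weakly improve. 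Making this last step fully rigorous for $b$-matchings — correctly handling the quantity-versus-quality subtlety of lexicographic preferences inside the alternating-walk decomposition — is the delicate part; the clean dichotomy I would aim to establish is ``either the reroute that incorporates $u_kw$ preserves cardinality, contradicting non-extendability, or it strictly harms some man, contradicting domination.''
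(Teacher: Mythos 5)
Your runtime argument, your maximality argument, and your reduction to men-only domination are all correct, and your ``moment the algorithm tested $w$'' analysis is exactly the right tool: it amounts to the paper's (implicitly asserted) greedy invariant that among all \emph{maximum} matchings agreeing with $M$ on $u_1,\dots,u_{k-1}$, $M$ gives $u_k$ the lexicographically best partner set. But the case $|M'|<|M|$, which you candidly flag as ``the delicate part,'' is a genuine gap, and the sketched alternating-walk analysis of $M\bigtriangleup M'$ is not how it gets closed. The missing idea in the paper's proof is that one should never feed $M'$ into the extendability test at all: $M'$ is used only to extract, for each strictly improving man $u$, his best new partner $w(u)\in M'(u)\setminus M(u)$, noting that lexicographic strict improvement forces every partner in $M(u)\setminus M'(u)$ to rank below $w(u)$. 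One then modifies $M$ itself, cardinality-preservingly, in one of two ways. If some improving $u_j$ has $w(u_j)$ unsaturated in $M$, then $u_j$ must be saturated in $M$ (else $M\cup u_jw(u_j)$ would beat the maximum matching $M$), so some $u_jw_l\in M\setminus M'$ exists and $M''=M\cup u_jw(u_j)\setminus u_jw_l$ is a feasible matching of the same size as $M$, hence maximum, in which $u_j$ strictly improves and no other man changes. Otherwise every such $w(u_j)$ is saturated in $M$; since she gains the edge $u_jw(u_j)$ in $M'$, capacity forces an edge of $M\setminus M'$ at her, whose male endpoint has $M(u)\neq M'(u)$ and weakly improves, hence (by strictness of lexicographic preferences) strictly improves. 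Iterating ``man $\to$ his best new partner $\to$ some lost edge $\to$ man'' closes into an alternating cycle $C$, and $M''=M\cup(M'\cap C)\setminus(M\cap C)$ is again a maximum matching in which each man on $C$ trades one lost partner (necessarily below $w(u)$) for his best new partner.

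In both cases $M''$ \emph{is} maximum, so your own $|M'|=|M|$ argument now applies verbatim to $M''$ in place of $M'$: take $k$ minimal with $M''(u_k)\succ_{u_k}M(u_k)$ (strictness gives $M''(u_i)=M(u_i)$ for $i<k$), locate the top woman $w$ on which they differ, and observe that $M''$ witnesses success of the test for $u_kw$, a contradiction. So your framework is salvageable, but it needs this swap-or-cycle construction as an explicit lemma; the hoped-for dichotomy ``either the reroute preserves cardinality or it harms some man'' does not fall out of a generic decomposition of $M\bigtriangleup M'$, precisely because components where $M'$ has fewer edges can coexist with all men weakly improving (quality for quantity), and only the targeted best-new-partner cycle avoids that trap. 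One small phrasing slip besides: if $u_k$ is saturated ``at that moment,'' the algorithm never tests $w$ at all (the while-condition fails first), though the capacity-violation contradiction you draw in that subcase is still correct as stated.
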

\begin{proof}
During the algorithm, each edge is checked at most once. Also, deciding if there exist a maximum size matching containing a given set of edges can be done in polynomial time (for example with the Hungarian method). So the running time of the algorithm is polynomial. 

Suppose that there is a matching $M'$, where each man obtains an at least as good partner set than in $M$ and there is a man $u_j$, who obtains strictly better. (If no such matching exists, then there cannot be any matching that Pareto-dominates $M$, since it would satisfy these conditions). 

Let the best partner of $u_j$ in $M'\setminus M$ be $w(u_j)$. If $w(u_j)$ is not saturated in $M$, then $u_j$ has to be saturated by the maximality of $M$. But then, there is an edge $u_jw_l\in M\setminus M'$, so letting $M''=M\cup u_jw(u_j)\setminus u_jw_l$, we obtain a maximum size matching, where each man has an at least as good situation and $u_j$ is strictly better off. However, this is a contradiction, since no maximum size matching can satisfy this, since among the maximum matchings, where $u_1,...,u_{j-1}$ obtain the same partner set, $M$ is one that is best for $u_j$ under lexicographical preferences. 

So for each man $u_j$, who is better off in $M'$, their best partner $w(u_j)$ is saturated in $M$. Hence from each such woman, there is an edge that is in $M\setminus M'$. Therefore, starting from $u_j$ and then, from each man going to their best partner in $M'\setminus M$ and from each woman, going to an arbitrary partner in $M\setminus M'$, we can find a cycle $C$. Each man in this cycle improves her situation in $M'$, so because of the lexicographicality of the preferences, they like the edge of $C\cap M'$ containing them better than the original ones. Thus letting $M''=M\cup (M'\cap C)\setminus (M\cap C)$, we obtain a maximum size matching, that dominates $M$ for each man, and at least one of them is strictly better off, a contradiction. 
\end{proof}

We remark that the above algorithm produces a matching that can be considered men-optimal in a sense. We could also run this algorithm with the women proposing instead of the men to obtain Pareto-optimal matchings that are "women-optimal", which yields a similar situation as in the original stable marriage problem.

\section*{Final notes}

We studied the strong core and Pareto-optimal solutions for multiple partners matching problems under lexicographic preferences. A natural question is to also consider the (weak) core. It is indeed an open question whether the core is always non-empty for this setting, and what is the computational complexity of finding such a solution. However, we believe that the concept of core can be too weak, that allows rather sub-optimal solutions. As a very simple example take three agents $a$, $b$ and $c$, where $a$ and $c$ has capacity one and $b$ has capacity two, and the possible pair are $ab$ and $bc$, where $b$ prefers $a$ to $c$. In this example the unique strong core, Pareto-optimal and stable solution is to take both edges. However, edge $ab$ alone is also in the core, since $bc$ is not blocking and the grandcoalition does not block either, since $a$ would not strictly improve.  

We worked with strict preferences in this paper, but some of our tractability results might be possible to extend for weak preferences. Finally, one could consider also to extend our efficient algorithms for finding certain solutions for more general additive and responsive preferences.

\bibliographystyle{plain}
\bibliography{lex}

\end{document}